\def\countsketch{\text{\texttt{CountSketch}}}
\def\sketch{\text{\texttt{Sketch}}}
\newtheorem{thm}{Theorem}[section]
\newtheorem{theorem}{Theorem}[section]
\newtheorem{lemma}[thm]{Lemma}
\newtheorem{remark}[thm]{Remark}
\newtheorem{definition}[thm]{Definition}
\def\E{\textsf{E}}
\def\Var{\textsf{Var}}
\def\Pr{\textsf{Pr}}
\def\supp{\texttt{supp}}
\providecommand{\ignore}[1]{}
\newcommand{\tail}{\mathsf{tail}}
\title{Tricking the Hashing Trick: A Tight Lower Bound on the Robustness of CountSketch to Adaptive Inputs}
\author{Edith Cohen\thanks{Google Research and Tel Aviv University. \texttt{edith@cohenwang.com}.} \and Jelani Nelson\thanks{UC Berkeley and Google Research. \texttt{minilek@alum.mit.edu}.} \and Tam\'{a}s Sarl\'{o}s\thanks{Google Research. \texttt{stamas@google.com}.} \and Uri Stemmer\thanks{Tel Aviv University and Google Research. \texttt{u@uri.co.il}. Partially supported by the Israel Science Foundation (grant 1871/19)
and by Len Blavatnik and the Blavatnik Family foundation.}}
\date{}
\begin{document}

\maketitle

\begin{abstract}
  CountSketch and Feature Hashing  (the "hashing trick")
 are popular randomized dimensionality reduction
methods that support recovery of $\ell_2$-heavy hitters (keys $i$ where $v_i^2 > \epsilon \|\boldsymbol{v}\|_2^2$) and approximate inner products. When the inputs are {\em not adaptive} (do not depend on prior outputs), classic estimators applied to a sketch of size $O(\ell/\epsilon)$ are accurate for a number of queries that is exponential in $\ell$.  When inputs are adaptive, however, an adversarial input can be constructed after $O(\ell)$ queries with the classic estimator and the best known robust estimator only supports $\tilde{O}(\ell^2)$ queries.
  In this work we show that this quadratic dependence is in a sense inherent: We design an attack
that after $O(\ell^2)$ queries produces an adversarial input vector whose sketch is
highly biased. Our attack uses "natural" non-adaptive inputs (only the final adversarial input is chosen adaptively) and universally applies  with any correct estimator, including one that is unknown to the attacker. In that, we expose inherent vulnerability of this fundamental method.
\end{abstract}

\section{Introduction}
\label{sec:introduction}

\texttt{CountSketch}~\cite{CharikarCFC:2002} and its variant {\em feature hashing}~\cite{MoodyD:NIPS1989,WeinbergerDLSA:ICML2009} are 
immensely popular dimensionality reduction methods that map 
input vectors in $\mathbb{R}^n$ to their sketches in $\mathbb{R}^d$ (where $d\ll n$). 
The methods have many applications in machine learning and data analysis and often are used as components in large models or pipelines~\cite{WeinbergerDLSA:ICML2009,pmlr-v5-shi09a,PhamPagh:KDD2013,ChenWTWC:ICML2015,KDD-2016-ChenWTWC,AmiraliSLDSB:ICML2018,SpringKMA:ICML2019,AhleKKPVWZ:SODA2020,CPW:NEURIPS2020}. 

The mapping is specified by 
{\em internal randomness} $\rho\sim \mathcal{D}$ that determines a set of
$d=\ell\cdot b$ linear measurements vectors $(\boldsymbol{\mu}^{(j,k)})_{j\in [\ell],k\in[b]}$ in $\mathbb{R}^n$.
The sketch $\sketch_\rho(\boldsymbol{v})$ of a vector $\boldsymbol{v}\in \mathbb{R}^n$  is the matrix of $d$ linear measurements 
\begin{equation} \label{sketchdef:eq}
\sketch_{\boldsymbol{\rho}}(\boldsymbol{v}) := \left(\langle \boldsymbol{\mu}^{(j,k)},\boldsymbol{v}  \rangle\right)_{j\in[\ell], k\in [b]}  .
\end{equation}
The salient properties of \texttt{CountSketch} are that (when setting $b=O(1/\epsilon)$ and $\ell = O(\log n)$) the $\ell_2$-heavy hitters of an input $\boldsymbol{v}$, that is, keys $i$ with $v_i^2 > \epsilon \|\boldsymbol{v}\|_2^2$, can be recovered from $\sketch_\rho(\boldsymbol{v})$ and that the inner product of two vectors $\boldsymbol{v}$, $\boldsymbol{u}$ can be approximated from their respective sketches $\sketch_\rho(\boldsymbol{v})$, $\sketch_\rho(\boldsymbol{u})$. This recovery is performed by applying an appropriate {\em estimator} to the sketch, for example, the median estimator~\cite{CharikarCFC:2002} provides estimates on values of keys and supports heavy hitters recovery. But recovery can also be implicit, for example, when the sketch is used as a compression module in a Neural Network~\cite{ChenWTWC:ICML2015}, the recovery of features is learned.

  
Randomized data structures and algorithms, \texttt{CountSketch} included, are typically analysed under an assumption that the input sequence is generated in a way that does not depend on prior outputs and on the sketch randomness $\rho$. This assumption, however, does not always hold, for example, when there is an intention to construct an adversarial input or when the system has a feedback between inputs and outputs~\cite{SpringKMA:ICML2019,FetchSGD:ICML2020}.

The adaptive setting, where inputs may depend on prior outputs, is more challenging to analyse and 
there is 
growing interest in quantifying performance and in designing methods that are robust to adaptive inputs. Works in this vein span machine learning~\cite{szegedy2013intriguing,goodfellow2014explaining,athalye2018synthesizing,papernot2017practical}, adaptive data analysis~\cite{Freedman:1983,Ioannidis:2005,FreedmanParadox:2009,HardtUllman:FOCS2014,DworkFHPRR15}, dynamic graph algorithms~\cite{ShiloachEven:JACM1981,AhnGM:SODA2012,gawrychowskiMW:ICALP2020,GutenbergPW:SODA2020,Wajc:STOC2020, BKMNSS22}, and sketching and streaming algorithms~\cite{MironovNS:STOC2008,AhnGM:SODA2012,HardtW:STOC2013,BenEliezerJWY21,HassidimKMMS20,WoodruffZ21,AttiasCSS21,BEO21,GuptaJNRSW:NeurIPS2021,CLNSSS:ICML2022}. 
Robustness to adaptive inputs can trivially be achieved by using a fresh data structure for each query, or more finely, for each time the output changes.  Hence, $\ell$ independent replicas of a non-robust data structures suffice for supporting $\ell$ adaptive queries. A powerful connection between adaptive robustness and differential privacy \cite{DworkFHPRR15} and utilizing the workhorse of DP composition, yielded essentially a wrapper around $\ell$ independent replicas of a non-robust data structure that supports a quadratic number $\tilde{O}(\ell^2)$ of adaptive queries (or changes to the output) \cite{HassidimKMMS20,GuptaJNRSW:NeurIPS2021,BKMNSS22}. 
For the problem of recovering heavy-hitters from $\countsketch$, the "wrapper" method supports $\approx \epsilon \ell^2$ adaptive queries.  The current state of the art~\cite{CLNSSS:ICML2022}  is a robust estimator that works with a variant of $\countsketch$ and supports $\tilde{O}(\ell^2)$ adaptive queries.

Lower bounds on the performance of algorithms to adaptive inputs are obtained by designing an {\em attack}, a sequence of input vectors, that yields a constructed input that is adversarial to the internal randomness $\rho$. Tight lower bounds on the robustness of statistical queries were established by \cite{HardtUllman:FOCS2014,SteinkeUllman:COLT2015}, who designed an attack with a number of queries that is quadratic in the sample size, which matches the known upper bounds \cite{DworkFHPRR15}.  Their construction was based on fingerprinting codes \cite{BonehShaw:Crypto1995}. A downside of these constructions is that the inputs used in the attack are not "natural" and hence unlikely to shed some understanding on practical vulnerability in the presence of feedback. Hardt and Woodruff \cite{HardtW:STOC2013} provided an impossibility result for 
the task of estimating the norm of the input within a constant factor from (general) linear sketches.
Their construction works with arbitrary correct estimators and produces an adversarial distribution over inputs where the sketch measurements are "far" from their expectations.  The attack size, however, has a large polynomial dependence on the sketch size and is far from the respective upper bound.
Ben-Eliezer et al \cite{BenEliezerJWY21} present an attack on the AMS sketch~\cite{ams99} for the task of approximating the $\ell_2$-norm of the input vector. The attack is tailored to a simplified estimator that is linear in the set of linear measurements (whereas the "classic" estimator uses a median of measurements and is not linear). Their attack is efficient in that the number of queries is of the order of the sketch size, rendering the estimator non-robust. It also has an advantage of using "natural" inputs.
More recently, \cite{CLNSSS:ICML2022} presented attacks that are tailored to  specific estimators for $\countsketch$, including an attack of size $O(\ell)$ on the classic median estimator and an attack of size $O(\ell^2)$ on their proposed robust estimator.

\subsection*{Contribution}

Existing works proposed attacks of size that is far from the corresponding known upper bounds or are tailored to a particular estimator.  Specifically for $\countsketch$, there is an upper bound of $O(\ell^2)$ but it is not even known whether there exist estimators that support a super-quadratic number of adaptive inputs. This question is of particular importance because $\countsketch$ and its variants are the only known efficient sketching method that allow recovery of $\ell_2$-heavy hitters and approximating $\ell_2$ norms and inner products. Moreover, their form as linear measurements is particularly suitable for efficient implementations and integration as components in larger pipelines.  Finally, a recent lower bound precludes hope for an efficient deterministic (and hence fully robust) sketch \cite{KamathPW:CCC2021}, so it is likely that the vulnerabilities of $\countsketch$ are inherent to $\ell_2$-heavy hitter (and approximate $\ell_2$-norm and inner product) recovery from a small sketch.

We construct a {\em universal} attack on \texttt{CountSketch}, that applies against any unknown, potentially non-linear, possibly state maintaining, estimator.  We only require that the estimator is correct.
Our attack uses $O(\ell^2)$ queries, matching the $\tilde{O}(\ell^2)$ robust estimator upper bound \cite{CLNSSS:ICML2022}. Moreover, it suffices for the purpose of the attack that the estimator only reports a set of candidate heavy keys without their approximate values (we only require that heavy hitters are reported with very high probability and $0$ value keys are reported with very small probability). Our attack also applies against a correct inner-product estimator (that distinguishes between $\langle \boldsymbol{v},\boldsymbol{u}\rangle=0$ (reported with very small probability) and  $\langle \boldsymbol{v},\boldsymbol{u}\rangle^2 \geq \epsilon \|\boldsymbol{v}\|_2^2 \|\boldsymbol{u}\|_2^2$ (reported with high probability.))  Additionally, we apply our attack to correct $\ell_2$-norm estimators applied to an AMS sketch~\cite{ams99} and obtain that an attack of size $O(\ell^2)$ suffices to construct an adversarial input. The AMS sketch can be viewed as a $\countsketch$ with $b=1$ and is similar to the Johnson Lindenstrauss transform \cite{JLlemma:1984}.

The product of our attack (with high probability) is an {\em adversarial input} $\boldsymbol{v}$ on which the
measurement values of $\sketch_\rho(\boldsymbol{v})$ are very biased with respect to their distribution when $\rho\sim\mathcal{D}$. Specifically, the design of $\countsketch$ results in  
linear measurements that are unbiased for any input $\boldsymbol{v}$ under the sketch distribution $\rho\sim \cal{D}$: For each key $i$ and measurement vectors $\boldsymbol{\mu}$ with $\mu_i \not = 0$ it holds that 
$\E_{\rho}[\langle \boldsymbol{v},\boldsymbol{\mu }\rangle/ \mu_i- v_i]=0$ but the corresponding expected values for our adversarial  $\boldsymbol{v}$ in $\sketch_\rho(\boldsymbol{v})$ are large 
($\geq B \epsilon \|\boldsymbol{v}\|_2^2$ for a desired $B>1$).
This "bias" means that the known standard (and robust) estimators for heavy hitters and inner products would fail on this adversarial input. And generally estimators that satisfy the usual design goal of being correct on any input with high probability over the distribution of $\sketch_\rho(\boldsymbol{v})$ may not be  correct on the adversarial inputs. 
We note however that our result does not preclude the existence of specialized estimators that are correct on our adversarial inputs. We only show that a construction of an input that is adversarial to $\rho$ is possible with any correct estimator. 

Finally, our attacks uses "natural" inputs that constitute of a heavy key and random noise. The final adversarial input is a linear combination of the noise components according to the heavy hitter reports and is the only one that depends on prior outputs. The simplicity of this attack suggests "practical" vulnerability of this fundamental sketching technique.

\medskip
{\bf Technique}
Our attacks construct an adversarial input with respect to key $h$.  The high level structure is to generate "random tails," $(\boldsymbol{z}^{(t)})_{t\in [r]}$, which are vectors with small random entries. Ideally, we would like to determine for each $\boldsymbol{z}^{(t)}$ whether it is biased up or down with respect to key $h$.  Roughly, considering the set $T_h$ of $\ell$ measurement vectors with $\mu_h \in \{-1,1\}$, determine the sign $s^{*(t)}$ of $\frac{1}{\ell} \sum_{\boldsymbol{\mu} \in T_h} \langle \boldsymbol{\mu},\boldsymbol{z}^{(t)} \rangle \cdot \mu_h$.  
If we had that, the linear combination
$\boldsymbol{z}^{*(A)}= \sum_{t\in[r]} s^{*(t)} \boldsymbol{z}^{(t)}$ (with large enough $r$) is 
an adversarial input. The intuition why this helps is that the bias accumulates linearly with the number of tails $r$ whereas the standard deviation (essentially the $\ell_2$ norm), considering randomness of the selection of tails, increases proportionally to $\sqrt{r}$. 
The attack strategy is then to design query vectors of the form
$v^{(t)}_h \boldsymbol{e}_h + \boldsymbol{z}^{(t)}$
so 
that from 
whether or not $h$ is reported as a heavy hitter candidate
we obtain $s^{(t)}$ that correlates with $s^{*(t)}$. A
higher correlation $\E[s^{*(t)} s^{(t)}]$ yields more effective attacks:
With $\E[s^{*(t)} s^{(t)}] = \Omega(1)$ we get attacks of size $r=O(\ell)$ and with $\E[s^{*(t)} s^{(t)}]=\Omega(1/\sqrt{\ell})$ we get attack of size  $r=O(\ell^2)$. We show that we can obtain $s^{(t)}$ with $\E[s^{*(t)} s^{(t)}]=\Omega(1/\sqrt{\ell})$ (thus matching the upper bound) against any arbitrary and adaptable estimator as long as it is correct. The difficulty is that such an estimator can be non monotone in the value $v_h$ of the heavy key and change between queries. 
Our approach is to select the value $v_h$ of the heavy key in the input vectors  uniformly at random from an interval that covers the "uncertainty region" between values that must be reported as heavy to values that are hard to distinguish from $0$ and hence should not be reported. We then observe that {\em in expectation} over the random choice, any correct estimator must have a slight reporting advantage with larger $v_h$.  Finally, we show that any particular sketch content is equally likely with symmetrically larger $v_h$ with a tail that is biased down or a smaller $v_h$ with a tail that is biased up.  This translates to a slight reporting advantage for $h$ as a candidate heavy hitter when the tail is "biased up."  Therefore by taking $s^{(t)}=1$ when $h$ is reported and $s^{(t)}=-1$ otherwise we have the desired correlation.

\medskip
{\bf Remark:} We note that our use of "noise vectors" that have sparsity larger than sketch size is necessary to break robustness.  This because with high probability, from a $\countsketch$ with parameters $b=O(1/\epsilon)$ and $\ell = O(\log n)$, we can fully recover {\em all} vectors $\boldsymbol{v}\in\mathbb{R}^n$ with at most $O(1/\epsilon)$ non-zeros. Therefore, $\countsketch$ is fully robust to sufficiently sparse inputs, but turns out to be less robust on inputs where a set of 
$O(1/\epsilon)$ heavy entries is augmented with noise.

\medskip
{\bf Related work:} In terms of techniques, our work is most related to \cite{CLNSSS:ICML2022} in that the structure of our attack vectors is similar to those used in \cite{CLNSSS:ICML2022} to construct a tailored attack on the classic estimator. The generalization however to a "universal" attack that is effective against arbitrary and unknown estimators was delicate and required multiple new ideas.  Our contribution is also related and in a sense complementary to \cite{HardtW:STOC2013} that designed attack on linear sketches that applies with any correct estimator for (approximate) norms. Their attack is much less efficient in that its size is a higher degree polynomial and it uses dependent (adaptive) inputs (whereas with our attack only the final adversarial input depends on prior outputs). The product of their attack are constructed vectors that are in the (approximate) null space of the sketching matrix. These "noise" vectors can have large norms but are "invisible" in the sketch. When such "noise" is added to an input with a signal (say a heavy hitter), the "signal" is suppressed (entry no longer heavy) but can still be recovered from the sketch. Our attack fails the sketch matrix in a complementary way: we construct "noise" vectors that do not involve a signal (a heavy entry) but the sketch mimics a presence of that particular signal.

\medskip
{\bf Overview:}
Our attack is described in Section~\ref{sec:attack_structure} and is
analysed in subsequent sections.
The query vectors used in the attack have a simple sketch structure (described in Section~\ref{sketchdist:sec}) and simplify the estimation task (as described in Section~\ref{sec:estimators}). In Section~\ref{simplifiedminest:sec} we cast the attack on heavy hitters estimator as one on a {\em mean estimator} and carry out the remaining analysis in this simplified context. Section~\ref{minestproofs:sec} includes detailed proofs. In Section~\ref{AMS:sec} we apply our attack technique to $\ell_2$ norm estimators for the AMS sketch~\cite{ams99}. 

\section{Preliminaries}
\label{sec:preliminaries}

We use boldface notation for vectors $\boldsymbol{v}$,  non boldface
for scalars $v$, $\langle \boldsymbol{v},\boldsymbol{u} \rangle =\sum_i v_i u_i$ for
inner product,  and $v\cdot u$ for scalar product. For a vector $\boldsymbol{v}\in \mathbb{R}^n$ we refer to $i\in[n]$ as a {\em key} and $v_i$ as the value of the $i$th key (entry) and denote by $\overline{v}= \frac{1}{n} \sum_{i=1}^n v_i$ the mean value.
For exposition clarity, we use $\approx$ to mean "within a small relative error."
We denote by $\mathcal{N}(v,\sigma^2)$ the normal distribution with mean $v$ and variance $\sigma^2$ and by $\boldsymbol{u} \sim \mathcal{N}_\ell(v,\sigma^2)$ a vector in $\mathbb{R}^\ell$ with entries that are i.i.d.\ $\mathcal{N}(v,\sigma^2)$. The probability density function of $\mathcal{N}_\ell(v,\sigma^2)$ is
\begin{equation} \label{pdfnormal:eq}
f_v(\boldsymbol{u}) = 
\prod_{i\in[\ell]} \frac{1}{\sigma\sqrt{2\pi}} e^{-\frac{1}{2} \left(\frac{u_i-v}{\sigma}\right)^2}\ .
\end{equation}

\begin{definition} (heavy hitter) \label{HH:def}
For $\epsilon>0$,  and a vector $\boldsymbol{v}\in \mathbb{R}^n$, key $i\in [n]$ is an $\ell_2$-$\epsilon$-{\em heavy hitter} if $v_i^2 \geq \epsilon  \|\boldsymbol{v} \|_2^2$.
\end{definition}
Clearly, there can be at most $1/\epsilon$ $\ell_2$-$\epsilon$ heavy hitters.

\begin{definition} (Heavy Hitters Estimator) \label{HHproblem:def}
A $\ell_2$-$\epsilon$-heavy hitters estimator is
applied to a sketch of an input vector $\boldsymbol{v}\in\mathbb{R}^n$ and returns a set of entries $K\subset [n]$.  The output is {\em correct} if  $K$ includes all the $\ell_2$-$\epsilon$-heavy hitters keys and does not include keys with $v_i=0$.
\end{definition}

\begin{remark} \label{classic:rem}
This correctness definition is weaker than what the classic
estimator applied to $\countsketch$ provides~\cite{CharikarCFC:2002}:
A sketch with $\ell = O(\log (n/\delta))$ provides with probability $1-\delta$, approximate values for all $n$ keys that are within $\epsilon  \|\boldsymbol{v}_{\tail(1/\epsilon)} \|_2^2$, noting that the tail (vector $\boldsymbol{v}_{\tail(1/\epsilon)}$ that is our input $\boldsymbol{v}$ with $1/\epsilon$ heaviest
entries nullified) has smaller norm than
$\boldsymbol{v}$, yielding tighter estimates. In particular, this yields a correct solution to the heavy hitters problem with probability $1-\delta$.
Since our focus here is on designing an attack, our design is stronger against weaker requirements since less information on the randomness is being revealed. 
\end{remark}

\begin{definition} (Inner Product Estimator) \label{IPproblem:def}
An inner-product estimator is
applied to sketches of two input vectors $\boldsymbol{v},\boldsymbol{u}\in\mathbb{R}^n$ and returns $s\in \{-1,1\}$. The output is {\em correct}
if $s=-1$ when $\langle \boldsymbol{v},\boldsymbol{u}\rangle=0$ and is $s=1$ when
$\langle \boldsymbol{v},\boldsymbol{u}\rangle^2 \geq \epsilon \|\boldsymbol{v}\|_2^2 \|\boldsymbol{u}\|_2^2$.
\end{definition}  

\subsection{\texttt{\countsketch}}
The sketch~\cite{CharikarCFC:2002}  is specified by parameters $(n,\ell,b)$, 
where $n$ is the dimension of input vectors, and $d= \ell\cdot b$.
The internal randomness $\rho$ specifies a set of random hash functions 
$h_r:[n]\rightarrow [b]$ ($r\in [\ell]$) with the marginals that $\forall k\in [b]$, $i\in [n]$, $\Pr[h_j(i) = k]=1/b$, and 
$s_{j}:[n]\rightarrow \{-1,1\}$ ($ j\in [\ell]$) so that $\Pr[s_{j}(i)=1]=1/2$. 
These hash functions define $d = \ell\cdot b$ measurement vectors that are organized as $\ell$ sets of $b$ vectors each
$\boldsymbol{\mu}^{(j,k)}$ ($j\in [\ell]$, $k\in [b]$) where:
\[ \mu^{(j,k)}_i := \mathbbm{1}_{\{h_j(i) = k\}} s_{j}(i) .\]

For an input vector $\boldsymbol{v}\in\mathbb{R}^n$,
$\sketch_\rho(\boldsymbol{v}) := (\langle \boldsymbol{\mu}^{(j,k)},\boldsymbol{v}\rangle)_{j,k}$
is the set of the respective
measurement values.
Note that for each key $i\in[n]$ there are exactly $\ell$ measurement
vectors with a nonzero $i$th entry: $(\boldsymbol{\mu}^{(j,h_j(i))})_{j\in[\ell]}$ and these measurement vectors are independent (as the only dependency is between measurement in the same
set of $b$, and there is exactly one from each set).
The respective set of $\ell$ adjusted
measurements:
\begin{equation} \label{adjmeasure:eq}
(\langle \boldsymbol{\mu}^{(j,h_j(i))}, \boldsymbol{v} \rangle
\mu^{(j,h_j(i))}_i)_{j\in [\ell]}
\end{equation}  
are unbiased estimates of $v_i$:
$\E_\rho [\langle \boldsymbol{\mu}^{(j,h_j(i))}, \boldsymbol{v} \rangle
\mu^{(j,h_j(i))}_i ] = v_i$.

The median estimator~\cite{CharikarCFC:2002} uses the median adjusted measurement to estimate the value $v_i$ of each key $i$. The $O(1/\epsilon)$ keys with highest magnitude estimates are then reported as heavy hitters. 
When the same randomness $\rho$ is used for $r$ non-adaptive inputs
(inputs selected independently of $\rho$ and prior outputs), 
sketch parameter settings of $\ell=\log (r \cdot n/\delta)$ and $b =
O(\epsilon^{-1})$ guarantee that with probability $1-\delta$, all outputs are correct (in the sense of Remark~\ref{classic:rem}).

$\countsketch$ also supports estimation of inner products.
For two vectors $\boldsymbol{v}$, $\boldsymbol{u}$, we obtain an
unbiased estimate of their inner product from the respective
inner product of the $j\in [\ell]$th row of measurements:
\begin{equation} \label{innerproduct:eq}
\sum_{k\in[b]} \langle \boldsymbol{\mu}^{(j,k)}, \boldsymbol{v} \rangle \cdot
\langle \boldsymbol{\mu}^{(j,k)}, \boldsymbol{u} \rangle\ .
\end{equation}
The median of these $\ell$ estimates is within relative error $\sqrt{\epsilon}$ with probability $1-\exp(\Omega(-\ell))$.

We note that pairwise independent
hash functions $h_j$ and $s_{j}$ suffice for obtaining the guarantees of
Remark~\ref{classic:rem}
\cite{CharikarCFC:2002} whereas 4-wise independence is needed for approximate
inner products. The analysis of the attack we present here, however,
holds even under full randomness.

\subsection{Adversarial Input for \texttt{\countsketch}}

\begin{definition} [Adversarial input] \label{adversarialinput:def}
  We say that an attack $\mathcal{A}$ that is applied to a sketch with randomness $\rho$ and outputs
  $i\in [n]$ and $\boldsymbol{z}^{(A)}\in \{-1,0,1\}^n$ (with $z^{(A)}_i=0$) is   
  $(B,\beta)$-{\em adversarial} (for $B>1$) if, with probability at least $1-\beta$ over the randomness of $\rho,\mathcal{A}$, the adjusted measurements \eqref{adjmeasure:eq} satisfy:
  \begin{equation} \label{adverse:def}
      \Pr_{\rho\sim\mathcal{D},\mathcal{A}}\left[ \frac{1}{\ell} \sum_{j\in [\ell]} 
      \langle \boldsymbol{\mu}^{(j,h_j(i))}, \boldsymbol{z}^{(A)} \rangle
\mu^{(j,h_j(i))}_i \geq \sqrt{\frac{B}{b}}\| \boldsymbol{z}^{(A)}\|_2 \right] \geq 1-\beta .
  \end{equation}
\end{definition}

The adversarial input $\boldsymbol{z}^{(A)}$ is a noise vector (with no heavy hitters) but $\sketch_\rho(\boldsymbol{z}^{(A)})$ "looks like" (in terms of the average adjusted measurement of key $i$) a sketch of a vector with a heavy key $i$.  
It follows from the standard analysis of $\countsketch$ that the event
\begin{equation} \label{iheavy:eq}
      \frac{1}{\ell} \sum_{j\in [\ell]} 
      \langle \boldsymbol{\mu}^{(j,h_j(i))}, \boldsymbol{v} \rangle
\mu^{(j,h_j(i))}_i \geq \sqrt{\frac{B}{b}}\| \boldsymbol{v}\|_2\ .
  \end{equation}
(that correspondence to~\eqref{adverse:def}) is very likely for vectors $\boldsymbol{v}$ such that $h$ is a heavy hitter (Definition~\ref{HH:def}) and extremely unlikely when $h$ is not a heavy hitter, and in particular, when $v_h=0$.
Similarly for inner products,
considering  inner product 
$\boldsymbol{v}$
with the standard basis vector
$\boldsymbol{e}_i$, the 
sketch-based estimates \eqref{innerproduct:eq} of each $j\in \ell$ computed from $\sketch_\rho(\boldsymbol{e}_i)$ and
$\sketch_\rho(\boldsymbol{v})$
are equal to the respective 
adjusted measurement \eqref{adjmeasure:eq} of $i$ from 
$\sketch_\rho(\boldsymbol{v})$.
The event \eqref{iheavy:eq} is very likely when 
$\langle \boldsymbol{e}_i, \boldsymbol{v}\rangle \geq 
\epsilon \| \boldsymbol{v}\|_2^2$ and very unlikely when
$\langle \boldsymbol{e}_i, \boldsymbol{v}\rangle=0$, noting that
for our adversarial input $\boldsymbol{z}^{(A)}$ it holds that 
$\langle \boldsymbol{e}_i, \boldsymbol{z}^{(A)}\rangle =0$.

While this follows from the standard analysis, the simple structure of our noise vectors allows for a particularly simple argument for bounding the probability of \eqref{iheavy:eq} for $\boldsymbol{v}$ with the structure of $\boldsymbol{z}^{(A)}$ : The distribution of an adjusted measurement of a $\boldsymbol{v} \in \{-1,0,1\}^n$ and $v_i=0$ with support size
$m=|\supp{(\boldsymbol{v}}| = \|\boldsymbol{v}\|^2_2$ approaches $\mathcal{N}(0,\frac{m}{b})$ (for large $m/b$),  and thus
the average approaches
$\mathcal{N}(0,\frac{m}{\ell\cdot b})$. Therefore, the probability of \eqref{adverse:def} on a random sketch of $\boldsymbol{z}^{(A)}$ is $\leq \exp(-\ell B/2)$ (applying tail bounds on the probability of value exceeding $\sqrt{\ell B}$ standard deviations).

\section{Attack Description}
\label{sec:attack_structure}

We describe our attack against heavy hitters estimators.  The modifications needed for it to apply with inner product estimator are described in Section~\ref{IP:sec}.
 The attack is an interaction between the following components:
\begin{itemize}
    \item 
Internal randomness $\rho\sim\mathcal{D}$ that specifies linear measurement vectors $(\boldsymbol{\mu}^{(j,k)})_{j\in[\ell],k\in[b]}$. The sketch $\sketch_\rho(\boldsymbol{v})$ of a vector 
$\boldmath{v}\in \mathbb{R}^n$ is the set of measurements
$(\langle \boldsymbol{\mu}^{(j,k)},\boldsymbol{v}\rangle)_{j\in[\ell],k\in[b]}$.
 \item
A {\em query-response algorithm} that at each step $t$ chooses a heavy hitters estimator (see Definition~\ref{HHproblem:def}).
The choice may depend on the randomness $\rho$ and prior queries and responses $(\sketch_\rho(\boldsymbol{v}^{(t')}),K^{(t')})_{t'<t}$. 
The algorithm receives 
$\sketch_\rho(\boldsymbol{v}^{(t)})$, applies the estimator to the sketch,
and outputs $K^{(t)}$.
\item
An  {\em adversary} that issues a sequence of input queries  $(\boldsymbol{v}^{(t)})_t$ and collects the responses $(K^{(t)})_t$. 
The randomness $\rho$ and the sketches of the query vectors $(\sketch_\rho(\boldsymbol{v}^{(t)}))_t$ are not known to the adversary.
The goal is to construct an
  {\em adversarial input vector} $\boldsymbol{z}^{(A)}$ (see Definition~\ref{adversarialinput:def}).  
\end{itemize}

 Our adversary generates the
{\em query vectors} 
$(\boldsymbol{v}^{(t)})_{t\in [r]}$ non-adaptively
as described in Section~\ref{queryvec:sec}.  The attack interaction
and its properties are stated in Section~\ref{sec:universal_attack}.

\subsection{Query Vectors} \label{queryvec:sec}
Our attack query vectors $(\boldsymbol{v}^{(t)})_{t\in [r]}$ of
the form:
\begin{equation} \label{input:eq}
    \boldsymbol{v}^{(t)} := v^{(t)}_h \boldsymbol{e}_h +
\boldsymbol{z}^{(t)}\in \mathbb{R}^n, 
\end{equation}
where $h$ is a special {\em heavy} key, that is selected uniformly
$h\sim \mathcal{U}[n]$ and remains fixed, 
$\boldsymbol{e}_h$ is the standard basis vector (axis-aligned unit vector along $h$), and 
the vectors $\boldsymbol{z}^{(t)}$ are
{\em tails}.  The (randomized) construction of tails is described in
Algorithm~\ref{algo:tails}.
The tail vectors $(\boldsymbol{z}^{(t)})_{t\in[r]}$ have
support of size $|\supp(\boldsymbol{z}^{(t)})| = m$ that does not
include key $h$ 
($h\not\in  \supp(\boldsymbol{z}^{(t)})$) and so that the supports of different tails are disjoint:
\[
  t_1\not= t_2 \implies \supp(\boldsymbol{z}^{(t_1)}) \cap \supp(\boldsymbol{z}^{(t_2)}) =
  \emptyset\ .
\]
For query $t$ and key $i\in\supp(\boldsymbol{z}^{(t)})$, the values are  selected
i.i.d.\ Rademacher $z^{(t)}_i \sim \mathcal{U}[\{-1,1\}]$.
Note that
$\|\boldsymbol{v}^{(t)}\|_2^2 = (v_h^{(t)})^2+ \|\boldsymbol{z}^{(t)}\|_2^2 = (v_h^{(t)})^2+ m$.
Note that the tails, and (as we shall see)  the selection of
$v^{(t)}_h$,  and hence the input vectors are constructed
non-adaptively.  Only the
final adversarial input vector depends on the output of the estimator on prior queries.   The parameter $m$ is set to a value that is polynomial in the sketch size and large enough so that certain approximations hold (see Section~\ref{sketchdist:sec}).

\begin{algorithm2e}
{\small
    \caption{\texttt{AttackTails}}
    \label{algo:tails}
    \DontPrintSemicolon
    \KwIn{Input dimension $n$, support size $m$, number of tails $r$}
    $h \gets \mathcal{U}[n]$ \tcp*{Special heavy hitter key}
    $S\gets \{h\}$ \tcp*{Keys used in support}
    \For{$t\in [r]$}{
      $S' \gets $ random subset of size $m$ from $[n]\setminus S$\;
      $\boldsymbol{z}^{(t)} \gets \boldsymbol{0}$\;
      \ForEach{$i\in S'$}
      {  
$\boldsymbol{z}^{(t)}_i \sim \mathcal{U}[\{-1,1\}]$\;
        }
    $S \gets S\cup S'$
  }
    \Return{$h, (\boldsymbol{z}^{(t)})_{t\in [r]}$}
}
\end{algorithm2e}

\begin{remark}
The only piece of information needed from the output of the estimator is whether the particular key $h$ is reported as a candidate heavy hitter of $\boldsymbol{v}^{(t)}$, that is, whether $h\in K^{(t)}$. 
Note that disclosing additional information
can only make the estimator {\em more} vulnerable to attacks.
\end{remark}

\subsection{Universal Attack}
\label{sec:universal_attack}


\begin{algorithm2e}
{\small
    \caption{\texttt{Universal Attack on $\countsketch$} Heavy Hitters Estimators}
    \label{algo:attack}
    \DontPrintSemicolon
    Use $a = \Theta(\sqrt{\frac{\ln(1/\delta_2)}{\ell}})$
and $c = \Theta(1)$\tcp*{With universal constants as in Lemma~\ref{correctest:lemma}}
    \KwIn{Initialized $\countsketch_\rho$ with parameters $(n,\ell,b)$, Query-response algorithm, number of queries $r$, tail support size $m$}
    $(h,(\boldsymbol{z}^{(t)})_{t\in [r]}) \gets$
    \texttt{AttackTails}$(n,m,r)$\tcp*{Algorithm~\ref{algo:tails}}
 \For(\tcp*[h]{Compute Query Vectors}){$t\in [r]$}{
    $v^{(t)}_h\sim \mathcal{U}[a\cdot \sigma, (c+2a)\cdot \sigma]$\tcp*{$\sigma := \sqrt{m/b}$}
    $\boldsymbol{v}^{(t)} \gets  v^{(t)}_h \boldsymbol{e}_h +
    \boldsymbol{z}^{(t)}$\tcp*{Query vectors}
    }
    \For(\tcp*[h]{Apply Query Response}){$t\in [r]$}
    {
      Choose a correct HH estimator $M^{(t')} $\tcp*{With correct reporting function (Definition~\ref{correctmean:def}), may depend on $(v^{(t')}_h,K^{(t')},M^{(t')})_{t'<t}$ and $\rho$ }
       $K^{(t)} \gets M^{(t)}(\countsketch_\rho(\boldsymbol{v}^{(t)}))$\tcp*{Apply estimator to sketch}
    \leIf{$h\in K^{(t)}$}{$s^{(t)}\gets 1$}{$s^{(t)}\gets -1$}
      }
     \Return{$\boldsymbol{z}^{(A)} \gets \sum_{t\in[r]} s^{(t)} \boldsymbol{z}^{(t)} $}
    \tcp*{Adversarial input}
    }
  \end{algorithm2e}
  
  Our attack interaction is described in Algorithm~\ref{algo:attack}.  We generate $r$ attack tails 
  using Algorithm~\ref{algo:tails}.  We then construct $r$ queries of the form \eqref{input:eq} with 
  i.i.d.\ $v^{(t)}_h\sim \mathcal{U}[a\cdot \sigma, (c+2a)\cdot \sigma]$.  At each step $t\in[r]$, we feed the sketch of $\boldsymbol{v}^{(t)}$ to the HH estimator selected by the query response algorithm and collect the output $K^{(t)}$ of the  estimator. We then set $s^{(t)} \gets 1$ if $h\in K^{(t)}$ and $s^{(t)} \gets -1$ if $h\not\in K^{(t)}$. 
 The final step computes the 
  {\em adversarial input}:
 \begin{equation}\label{adversetail:eq}
\boldsymbol{z}^{(A)} := \sum_{t\in[r]} s^{(t)} \boldsymbol{z}^{(t)} \ .
\end{equation}

The statements below only use measurement vectors with nonzero value for key $h$.  To simplify the notation, we
use $\boldsymbol{\mu}^{(j)} := \boldsymbol{\mu}^{(j,h_j(h))}$ for $j\in[\ell]$.
For randomness $\rho$, we use the notation $\boldsymbol{\mu}^{(j)}(\rho)$ for the respective measurement
vectors and $\mathcal{A}(\rho)$ for the output distribution of Algorithm~\ref{algo:attack} applied with randomness $\rho$. 

The adversarial input has $z^{(A)}_h = 0$ and norm $\|\boldsymbol{z}^{(A)} \|^2_2 = r\cdot m$ (it has support of size $r\cdot m$ with values in the support i.i.d\ Rademacher $U[\{-1,1\}]$).  
When an adversarial input $\boldsymbol{z}^{(A)} \sim \mathcal{A}(\rho_0)$ is sketched using 
a {\em random} $\rho\sim\mathcal{D}$ it holds that:
\begin{align*}
\forall j\in[\ell],\ \E_{\rho_0\sim \mathcal{D},\boldsymbol{z}^{(A)}\sim \mathcal{A}(\rho_0),\rho\sim\mathcal{D}}\left[\langle \boldsymbol{z}^{(A)}, \boldsymbol{\mu}^{(j)}(\rho)\rangle\right] &=0\ \\
\forall j\in[\ell],\ \Var_{\rho_0\sim \mathcal{D},\boldsymbol{z}^{(A)}\sim \mathcal{A}(\rho_0),\rho\sim\mathcal{D}}\left[\langle \boldsymbol{z}^{(A)}, \boldsymbol{\mu}^{(j)}(\rho) \rangle\right] &\approx \frac{r\cdot m}{b} = r\sigma^2\ .
\end{align*}
and since the $\ell$ measurements are independent we get:
\begin{align*}
     \E_{\rho_0\sim \mathcal{D},\boldsymbol{z}^{(A)}\sim \mathcal{A}(\rho_0),\rho\sim\mathcal{D}} \left[\frac{1}{\ell} \sum_{j\in [\ell]} \langle \boldsymbol{z}^{(A)}, \boldsymbol{\mu}^{(j)}(\rho)\rangle \cdot \mu^{(j)}_h(\rho)\right] &= 0\\
      \Var_{\rho_0\sim \mathcal{D},\boldsymbol{z}^{(A)}\sim \mathcal{A}(\rho_0),\rho\sim\mathcal{D}} \left[\frac{1}{\ell} \sum_{j\in [\ell]} \langle \boldsymbol{z}^{(A)}, \boldsymbol{\mu}^{(j)}(\rho)\rangle \cdot \mu^{(j)}_h(\rho)\right] &\approx \frac{r}{\ell} \cdot \sigma^2
  \end{align*}
 
 The adversarial input $\boldsymbol{z}^{(A)} \sim \mathcal{A}(\rho_0)$ behaves differently with respect to the particular randomness $\rho_0$ it was constructed for.  We will establish the following:

 \begin{lemma} [Properties of the adversarial input] \label{tailprop:lemma}
 \begin{align*}
     \E_{\rho_0\sim \mathcal{D},\boldsymbol{z}^{(A)}\sim \mathcal{A}(\rho_0)}\left[\frac{1}{\ell}\sum_{j\in [\ell]} \langle \boldsymbol{z}^{(A)}, \boldsymbol{\mu}^{(j)}(\rho_0)\rangle \cdot \mu^{(j)}_h(\rho_0)\right] &\approx \frac{r}{\ell} \cdot \frac{2\sigma}{c+a}\\
     \Var_{\rho_0\sim \mathcal{D},\boldsymbol{z}^{(A)}\sim \mathcal{A}(\rho_0)}\left[\frac{1}{\ell}\sum_{j\in [\ell]} \langle \boldsymbol{z}^{(A)}, \boldsymbol{\mu}^{(j)}(\rho_0)\rangle \cdot \mu^{(j)}_h(\rho_0)\right] &\approx \frac{r}{\ell} \cdot \sigma^2
  \end{align*}
  \end{lemma}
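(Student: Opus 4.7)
The plan is to expand by linearity and then treat the mean and variance separately. Define, for each $t\in[r]$,
\[
Y^{(t)}\;:=\;\frac{1}{\ell}\sum_{j\in[\ell]}\langle \boldsymbol{z}^{(t)},\boldsymbol{\mu}^{(j)}(\rho_0)\rangle\cdot\mu^{(j)}_h(\rho_0),
\]
so that by \eqref{adversetail:eq} the quantity inside both expectations of the lemma equals $\sum_{t\in[r]}s^{(t)}Y^{(t)}$. Both claims then reduce to computing the mean and the variance of this sum.

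For the mean, linearity reduces the claim to the per-tail identity $\E[s^{(t)}Y^{(t)}]\approx \tfrac{1}{\ell}\cdot\tfrac{2\sigma}{c+a}$ for every $t\in[r]$, and this is where correctness of the HH estimator enters. Following the overview in the technique paragraph: for a fresh tail $\boldsymbol{z}^{(t)}$ and a fresh $v_h^{(t)}\sim\mathcal{U}[a\sigma,(c+2a)\sigma]$, the conditional distribution of $\sketch_{\rho_0}(\boldsymbol{v}^{(t)})$ is invariant under simultaneously flipping the sign of the tail's contribution to key $h$ (which flips the sign of $Y^{(t)}$) and reflecting $v_h^{(t)}$ across the midpoint of its interval; correctness of any HH estimator forces a strictly higher reporting probability for $h$ at the larger $v_h^{(t)}$, and quantifying this reporting advantage, then averaging over the uniform choice of $v_h^{(t)}$ on an interval of length $(c+a)\sigma$, produces the factor $\tfrac{2\sigma}{c+a}$. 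I expect this per-tail identity to be the main obstacle; it is precisely the abstract "mean estimator" reduction the paper carries out in Sections~\ref{simplifiedminest:sec}--\ref{minestproofs:sec}.

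For the variance I first compute the diagonal. Since $|s^{(t)}|=1$, $\Var[s^{(t)}Y^{(t)}]\le\E[(Y^{(t)})^2]$, and a direct second-moment computation on the i.i.d.\ Rademacher entries of $\boldsymbol{z}^{(t)}$ yields $\E[(Y^{(t)})^2]\approx\sigma^2/\ell$: each $\langle\boldsymbol{z}^{(t)},\boldsymbol{\mu}^{(j)}\rangle$ is a sum over the $\approx m/b$ entries of $\supp(\boldsymbol{z}^{(t)})$ that collide with $h$ under $h_j$, contributing variance $\approx \sigma^2$; the $\ell$ rows use independent hashes so variances add and the $1/\ell^2$ scaling gives $\sigma^2/\ell$. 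The subtracted $\E[s^{(t)}Y^{(t)}]^2 = O(\sigma^2/\ell^2)$ is lower order. Summing yields the diagonal contribution $\approx r\sigma^2/\ell$.

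The off-diagonal terms I would bound via a Doob decomposition. Let $\mathcal{F}_{t-1}$ be the $\sigma$-algebra generated by $\rho_0$ together with the attack's history through step $t-1$; because the tail $\boldsymbol{z}^{(t)}$ and the value $v_h^{(t)}$ at step $t$ are fresh and independent of $\mathcal{F}_{t-1}$, the per-tail mean argument applies verbatim with whatever correct estimator the query-response algorithm has selected at step $t$, giving $\E[s^{(t)}Y^{(t)}\mid\mathcal{F}_{t-1}]\approx\tfrac{1}{\ell}\cdot\tfrac{2\sigma}{c+a}$. Hence the predictable part of the Doob decomposition of $\sum_t s^{(t)}Y^{(t)}$ is essentially the deterministic constant $\tfrac{r}{\ell}\cdot\tfrac{2\sigma}{c+a}$ and contributes only lower-order variance, while the martingale part has variance $\sum_t\E\!\left[\Var[s^{(t)}Y^{(t)}\mid\mathcal{F}_{t-1}]\right]\le r\sigma^2/\ell$, matching the diagonal analysis. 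Combining yields $\Var\approx\tfrac{r}{\ell}\sigma^2$, as claimed.
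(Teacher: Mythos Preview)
Your high-level plan---linearity for the mean, second moment plus a Doob/martingale decomposition for the variance---matches the paper, and your Doob argument for the variance is in fact more careful than the paper's appeal to ``independence of $s^{(t)}\alpha^{(t)}$ for different $t$'' in \eqref{claimV:eq}: the terms are not literally independent once the estimator adapts to history, but the conditional mean given $\mathcal{F}_{t-1}$ is (approximately) the same constant for \emph{any} correct reporting function, which is precisely the observation that makes your predictable part deterministic and the cross terms vanish.

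The sketch you give for the per-tail mean, however, misidentifies both ingredients of the actual argument. First, the symmetry is not ``flip the tail's contribution and reflect $v_h^{(t)}$ across the midpoint of its interval'': that map does not preserve the sketch $\boldsymbol{u}^{(t)}=v_h^{(t)}\boldsymbol{1}+\boldsymbol{u}^{*(t)}$, since no scalar shift of $v_h^{(t)}$ can undo a sign flip of the full vector $\boldsymbol{u}^{*(t)}$. The correct symmetry (Lemma~\ref{sym:lemma}) instead conditions on the realized sketch $\boldsymbol{u}$ and uses that the Gaussian density $f_v(\boldsymbol{u})$ depends on $v$ only through $(\overline{u}-v)^2$; this yields $\pi(v\mid -\alpha)=\pi(v-2\alpha\sigma\mid\alpha)$, a shift of $v$ by $2\alpha\sigma$ that depends on the realized normalized bias $\alpha$, not a reflection across a fixed point. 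Second, ``correctness forces a strictly higher reporting probability at the larger $v_h^{(t)}$'' is false pointwise---the estimator is allowed to be non-monotone in $v_h$, and the paper stresses this. Correctness only pins down the endpoints: $\pi(v\mid\alpha)\approx 0$ for $v\le a\sigma$ and $\approx 1$ for $v\ge c\sigma$. The factor $\tfrac{2\alpha}{c+a}$ then comes from a telescoping integral (Lemma~\ref{correctestavg:lemma}): averaging $\pi(v\mid\alpha)-\pi(v-2\alpha\sigma\mid\alpha)$ over $v\sim\mathcal{U}[a\sigma,(c+2a)\sigma]$ collapses to the boundary values divided by the interval length. Your reflection-plus-monotonicity heuristic would not survive an adversarially non-monotone correct estimator, which is exactly the regime the lemma must handle.
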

  The proof is provided in Section~\ref{simplifiedminest:sec} (following preparation in Sections~\ref{sketchdist:sec}-\ref{sec:estimators} and details in Section~\ref{minestproofs:sec}). The high level idea, as hinted in the introduction,  is that we establish that $h\in K^{(t)}$ and thus   $s^{(t)}=1$ is correlated with "positive bias", that is, with the event
  $\frac{1}{\ell} \sum_{j\in [\ell]} \langle s^{(t)}\boldsymbol{z}^{(t)}, \boldsymbol{\mu}^{(j)}(\rho_0)\rangle \cdot \mu^{(j)}_h(\rho_0) > 0$. In the sum
  $\sum_{t\in[r]} s^{(t)} \boldsymbol{z}^{(t)}$ the bias (which is ``forced''  error on the estimates) increases linearly with $r$ while the $\ell_2$ norm, which corresponds to the standard deviation of the error, increases
proportionally to $\sqrt{r}$.

With Lemma~\ref{tailprop:lemma} in hand, we are ready to show that $\boldsymbol{z}^{(A)}$ is an adversarial input (see Definition~\ref{adversarialinput:def}). 
\begin{theorem}[Adversarial input]
If for $B>1$ we use attack of size $r = B\cdot \ell^2 $ then 
\begin{align*}
     \E_{\rho_0\sim \mathcal{D},\boldsymbol{z}^{(A)}\sim \mathcal{A}(\rho_0)}\left[\frac{1}{\ell} \sum_{j\in [\ell]} \langle \boldsymbol{z}^{(A)}, \boldsymbol{\mu}^{(j)(\rho_0)}\rangle \cdot \mu^{(j)}_h(\rho_0)\right] &\approx \frac{2}{c+a} \sqrt{\frac{B}{b}} \| \boldsymbol{z}^{(A)} \|_2\\
      \Var_{\rho_0\sim \mathcal{D},\boldsymbol{z}^{(A)}\sim \mathcal{A}(\rho_0)}\left[\frac{1}{\ell} \sum_{j\in [\ell]} \langle \boldsymbol{z}^{(A)}, \boldsymbol{\mu}^{(j)}(\rho_0)\rangle \cdot \mu^{(j)}_h(\rho_0)\right] &\approx \frac{1}{\ell\cdot b} \|\boldsymbol{z}^{(A)} \|^2_2\ .
\end{align*}     
\end{theorem}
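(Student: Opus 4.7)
The plan is to obtain the theorem as a direct algebraic substitution into Lemma~\ref{tailprop:lemma}, using only two auxiliary facts: the norm identity $\|\boldsymbol{z}^{(A)}\|_2^2 = r\cdot m$ (which holds because the tails $\boldsymbol{z}^{(t)}$ have disjoint supports of size $m$ with $\pm 1$ entries, so the $\pm 1$ sign flip by $s^{(t)}$ preserves the per-coordinate magnitude) and the identity $\sigma^2 = m/b$ built into the definition of $\sigma$ in Algorithm~\ref{algo:attack}.

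With these two identities in hand, I would first handle the expectation. Lemma~\ref{tailprop:lemma} gives
\begin{equation*}
\E\Bigl[\tfrac{1}{\ell} \sum_{j\in[\ell]} \langle \boldsymbol{z}^{(A)},\boldsymbol{\mu}^{(j)}(\rho_0)\rangle \cdot \mu^{(j)}_h(\rho_0)\Bigr] \approx \frac{r}{\ell} \cdot \frac{2\sigma}{c+a} = \frac{2}{c+a} \cdot \frac{r}{\ell} \sqrt{\frac{m}{b}}.
\end{equation*}
Substituting $r = B\ell^2$ yields $\frac{r}{\ell} = B\ell = \sqrt{B} \cdot \sqrt{B\ell^2} = \sqrt{B}\cdot \sqrt{r}$, so the right-hand side becomes $\frac{2}{c+a}\sqrt{B}\cdot\sqrt{\frac{rm}{b}} = \frac{2}{c+a}\sqrt{\frac{B}{b}} \cdot \sqrt{rm} = \frac{2}{c+a}\sqrt{\frac{B}{b}}\, \|\boldsymbol{z}^{(A)}\|_2$, as desired.

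For the variance I would simply plug in: Lemma~\ref{tailprop:lemma} gives $\frac{r}{\ell}\cdot\sigma^2 = \frac{r}{\ell}\cdot\frac{m}{b} = \frac{rm}{\ell\cdot b} = \frac{\|\boldsymbol{z}^{(A)}\|_2^2}{\ell\cdot b}$, which is exactly the claimed form. The choice $r = B\ell^2$ is not needed here; any $r$ works, so the setting of $r$ is dictated solely by making the expectation large relative to the standard deviation.

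There is no serious obstacle at this stage: all the difficulty has been pushed into Lemma~\ref{tailprop:lemma} (which needs the analysis of Sections~\ref{sketchdist:sec}--\ref{minestproofs:sec} to establish the $\Omega(1/\sqrt{\ell})$ correlation between $s^{(t)}$ and the sign of the true tail bias). The only caveat worth a one-line comment in the writeup is that the $\approx$ symbols in Lemma~\ref{tailprop:lemma} are preserved verbatim through the substitution, since multiplying and dividing by the exact quantities $r$, $m$, $b$, $\ell$ introduces no additional approximation error. Thus the theorem follows immediately.
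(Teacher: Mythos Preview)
Your proposal is correct and follows essentially the same approach as the paper's own proof: both invoke Lemma~\ref{tailprop:lemma}, use the norm identity $\|\boldsymbol{z}^{(A)}\|_2^2 = r\cdot m$ and the definition $\sigma=\sqrt{m/b}$, and then carry out the identical algebraic substitutions with $r=B\ell^2$. The only cosmetic difference is the order in which the factors are regrouped.
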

\begin{proof}
Using Lemma~\ref{tailprop:lemma},
the expectation of the average with attack size $r=B\ell^2$ is
\begin{align*}
    \frac{r}{\ell} \cdot \frac{2\sigma}{c+a} &= \frac{r}{\ell} \sqrt{\frac{m}{b}} \cdot \frac{2}{c+a} \\
    &= \frac{2}{c+a} \frac{\sqrt{r}}{\ell} \frac{1}{\sqrt{b}}  \sqrt{r\cdot m} \\
    &= \frac{2}{c+a} \sqrt{\frac{B}{b}} \|\boldsymbol{z}^{(A)} \|_2 & \text{since $\|\boldsymbol{z}^{(A)} \|_2 = \sqrt{r\cdot m}$ }
\end{align*}
The variance of the average is
\begin{align*}
    \frac{r}{\ell} \sigma^2 =  \frac{r\cdot m}{\ell\cdot b}= \frac{1}{\ell\cdot b} \|\boldsymbol{z}^{(A)} \|^2_2
\end{align*}
\end{proof}

\subsection{Attack with an inner-product estimator} \label{IP:sec}

We describe the modifications to Algorithm~\ref{algo:attack} needed for the attack 
to apply with an inner-product estimator. 
We compute the same query vectors 
$\boldsymbol{v}^{(t)}_{t\in [r]}$.
At each step $t$, the query response algorithm chooses a correct inner-product estimator $M^{(t)}$
(see Definition~\ref{IPproblem:def}).
The query is issued for the inner 
product of 
$\boldsymbol{v}^{(t)}$
with the standard basis vector
$\boldsymbol{e}_h$.  Note that the value of the inner product is exactly $v^{(t)}_h$ and the requirement of correct reporting of the inner product (Definition~\ref{IPproblem:def}) on these query vectors matches the requirement of a correct heavy hitters reporting of the key $h$ (Definition~\ref{HHproblem:def}).

The input to the estimator are the sketches
$\sketch_\rho(\boldsymbol{e}_h)$ and 
$\sketch_\rho(\boldsymbol{v}^{(t)})$.
Note that the information available to the estimator from the provided sketches on $v^{(t)}_h$ is the same as with heavy-hitter queries:  $\sketch_\rho(\boldsymbol{e}_h)$ is simply the vector with entries $\mu_h^{(j)}$, which does not add information as $\rho$ and $h$ are assumed to be known to the estimator.
The same analysis therefore applies. 

\section{Sketch distribution} \label{sketchdist:sec}

In this section we show that with our particular query inputs \eqref{input:eq}, for large enough $m$, the sketch content that is relevant to determining whether $h$ is a candidate heavy hitter is approximately
$\boldsymbol{u}^{(t)} \sim \mathcal{N}_\ell(v^{(t)}_h,\sigma^2)$ where $\sigma = \sqrt{\frac{m}{b}}$. The random variables
$\boldsymbol{u}^{*(t)} = \boldsymbol{u}^{(t)}- v^{(t)}_h \boldsymbol{1}_\ell$ for $t\in[r]$ are approximately  i.i.d.\ from $\mathcal{N}_\ell(0,\sigma^2)$.

$\countsketch_\rho(\boldsymbol{v}^{(t)})$ includes $\ell\cdot b$
measurements but for inputs known to be of a form \eqref{input:eq} we
can restrict the estimator to only consider the $\ell$ adjusted
measurements \eqref{adjmeasure:eq} of key $h$.  We argue below that this 
restriction does not limit the power of the estimator.

To simplify our notation going forward, we re-index the set of
relevant measurement vectors and use 
$(\boldsymbol{\mu}^{(j)})_{j\in[\ell]}$ for
$(\boldsymbol{\mu}^{(j,h_j(h))})_{j\in[\ell]}$.
We denote by
$\boldsymbol{u}^{(t)}\in\mathbb{R}^\ell$
the random variable vector of the $\ell$ adjusted measurements (see Definition~\ref{adjmeasure:eq})
the sketch provides for $v^{(t)}_h$ and a random tail and by
$\boldsymbol{u}^{*(t)}\in\mathbb{R}^\ell$ the random variable of the respective contributions of the tail component $\boldsymbol{z}^{(t)}$ to these values:
 \begin{eqnarray} 
u^{*(t)}_j :=&  \langle \boldsymbol{z}^{(t)}, \boldsymbol{\mu}^{(j)}(\rho_0) \rangle \cdot  \mu^{(j)}_h(\rho_0) \label{ustardef:eq}\\ 
u^{(t)}_j :=& \langle \boldsymbol{v}^{(t)}, \boldsymbol{\mu}^{(j)}(\rho_0) \rangle \cdot  \mu^{(j)}_h(\rho_0) = v^{(t)}_h + u^{*(t)}_j \ . \label{onebucket:eq}
\end{eqnarray}

\begin{lemma}
The distribution of $(\boldsymbol{u}^{*(t)})_{t\in [r]}$ 
(over $\rho\in \mathcal{D}$ and random tails selection) is such that
$u^{*(t)}_j$ are i.i.d.\ 
for $t\in [r]$, $j\in [\ell]$) with distribution 
\begin{align}
    M^{(t)}_j &\sim \mathcal{B}(m,1/b) \nonumber\\
u^{*(t)}_j &\sim  2\cdot \mathcal{B}(M^{(t)}_j,\frac{1}{2})-M^{(t)}_j\ .    \label{ustardist:eq}
\end{align}
\end{lemma}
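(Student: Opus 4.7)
The plan is to unpack the definitions of $u^{*(t)}_j$, identify the collision count as a binomial, recognize the inner signed sum as a sum of i.i.d.\ Rademachers, and then verify independence across $(t, j)$.

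First I would substitute $\mu^{(j)}_i = \mathbbm{1}_{\{h_j(i) = h_j(h)\}} s_j(i)$ into \eqref{ustardef:eq}. Since $\supp(\boldsymbol{z}^{(t)}) = S^{(t)} \subseteq [n]\setminus\{h\}$ by Algorithm~\ref{algo:tails}, only keys in $S^{(t)}$ contribute, giving
\[
u^{*(t)}_j \;=\; \sum_{i \in S^{(t)} \,:\, h_j(i) = h_j(h)} z^{(t)}_i \, s_j(i)\, s_j(h).
\]
Setting $M^{(t)}_j := |\{i \in S^{(t)} : h_j(i) = h_j(h)\}|$ and using full independence of $h_j$ together with the stated marginal $\Pr[h_j(i)=k]=1/b$, the indicator events $\{h_j(i)=h_j(h)\}$ for $i\in S^{(t)}$ are conditionally i.i.d.\ Bernoulli$(1/b)$ given $h_j(h)$, so $M^{(t)}_j \sim \mathcal{B}(m,1/b)$ marginally.

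Next, conditional on $(S^{(t)}, h_j, s_j)$ with $M^{(t)}_j = m'$, I would argue that the $m'$ summands $z^{(t)}_i s_j(i) s_j(h)$ are i.i.d.\ Rademacher. The point is that each $z^{(t)}_i$ is a fresh independent Rademacher (independent of everything else), and the product of a Rademacher with any independent $\pm 1$ is itself Rademacher; independence across the summands follows because the $z^{(t)}_i$'s are jointly independent and ``absorb'' the common factor $s_j(h)$ coordinatewise. Hence the sum has the distribution of $m'$ i.i.d.\ Rademachers, which is exactly $2X - m'$ for $X \sim \mathcal{B}(m', 1/2)$, matching \eqref{ustardist:eq}.

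Finally, for the i.i.d.\ claim across $(t,j)$ I would split by row. For $j \ne j'$, the pairs $(h_j, s_j)$ and $(h_{j'}, s_{j'})$ are independent by construction, and the tail entries are the same, so the pairs $(M^{(t)}_j, u^{*(t)}_j)$ and $(M^{(t')}_{j'}, u^{*(t')}_{j'})$ decouple immediately. For $t \ne t'$ in the same row, Algorithm~\ref{algo:tails} guarantees $S^{(t)} \cap S^{(t')} = \emptyset$, so the contributing tail Rademachers $\{z^{(t)}_i\}_{i\in S^{(t)}}$ and $\{z^{(t')}_i\}_{i\in S^{(t')}}$ are disjoint and independent. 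The one delicate point---the main obstacle---is the shared factor $s_j(h)$; I would resolve it by conditioning on $s_j(h) \in \{+1,-1\}$ and invoking the sign-symmetry of each $u^{*(t)}_j$ (inherited from the independent $z^{(t)}$'s) so that the joint law does not depend on the value of $s_j(h)$, giving independence of $u^{*(t)}_j$ and $u^{*(t')}_j$. The remaining steps are direct bookkeeping from the definitions.
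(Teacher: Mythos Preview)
Your proposal is correct and follows essentially the same approach as the paper: identify the collision count $M^{(t)}_j$ as $\mathcal{B}(m,1/b)$, observe that the contributing terms are i.i.d.\ Rademacher, and argue independence across $(t,j)$ from disjoint tail supports and independent hash rows (the paper does the decomposition in the dual order, first across $t$ then across $j$, but the content is the same). If anything you are more careful than the paper, which glosses over the shared factor $\mu^{(j)}_h=s_j(h)$ in the same-row case that you explicitly handle via sign-symmetry of the $z^{(t)}_i$'s.
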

\begin{proof}
Note that the distribution of 
$(\boldsymbol{u}^{*(t)})_{t\in [r]}$ 
does not depend on the choices of
$(v^{(t)}_h)_{t\in [r]}$.

Consider the distribution of 
$u^{*(t)}_j$ (for $t\in [r]$, $j\in [\ell]$). The respective 
measurement vector is $\boldsymbol{\mu}^{(j)}$. 
We have $|\supp(\boldsymbol{z})|=m$ and for each $i\in \supp(\boldsymbol{z})$,  $\mu^{(j)}_i\not=0$ with probability $1/b$. Therefore, the 
number of keys in the support that contribute to the
measurement 
\[
M^{(t)}_j := \left| \{i \mid z^{(t)}_i \cdot \mu^{(j)}_i \not= 0\}\right|
\]
is a Binomial random variable $M^{(t)}_j \sim \mathcal{B}(m,1/b)$, hence, has expectation $\frac{m}{b}$ and variance $\frac{m}{b}$. The contributions of the $M^{(t)}_j$ keys are i.i.d.\ Rademacher $U\{-1,1\}$ (multiplying by $\mu^{(j)}_h$ which does not change that). 
Therefore, the contribution to the measurement, conditioned on $M^{(t)}_j$, has distribution 
\begin{equation*}
u^{*(t)}_j \sim  2\cdot \mathcal{B}(M^{(t)}_j,\frac{1}{2})-M^{(t)}_j\ .
\end{equation*}
Note that the random variables
$u^{*(t)}_j$ are symmetric and
$\E[u^{*(t)}_j]=0$.

Recall that the values $\mu^{(j)}_i$ for all keys $i\not=h$ and $j\in [\ell]$ are i.i.d.\ ($0$ with probability $(b-1)/b)$ and Rademacher otherwise).
Recall that the random tails are independent of the sketch randomness and the supports $\supp(\boldsymbol{z}^{(t)})$ for $t\in [r]$ are disjoint.  Even when conditioned on fixing the supports for all $t\in [r]$, all values of keys within the support, across all $t\in [r]$, are i.i.d.\ Rademacher.

We now argue that 
$\boldsymbol{u}^{*(t)}$ are i.i.d.\ for different $t\in [r]$.  This follows from the disjoint supports of
$\supp(\boldsymbol{z}^{(t)})$ for different $t\in [r]$ and therefore
\eqref{ustardef:eq} for different $t$ only depend on disjoint parts of each set of measurement vectors
$(\boldsymbol{\boldsymbol{\mu}^{(j)}})_{j\in [\ell]}$, with only values for keys $i\in \supp(\boldsymbol{z}^{(t)})$.

We next observe that for any $t\in [r]$,  the values
$u^{*(t)}_j$ are i.i.d.\ for $j\in [\ell]$.
In this case the applicable support $\supp(\boldsymbol{z}^{(t)})$
is fixed.  But the applicable parts of different measurement vectors $j\in[\ell]$ are independent.
Even when there is overlap in the sets of keys other than $h$ in $\supp(\boldsymbol{z}^{(t)})$ that map to different measurements, the measurement vectors randomness is independent, so the contributions of the same key to different measurements it maps to are independent $U\{-1,1\}$.
\end{proof}

We are now ready justify considering only the measurement with $\mu_h\not=0$ in each set $j\in [\ell]$:
The additional measurements have $\boldsymbol{\mu}_h= 0$ and thus do not
depend on $v_h$.  They do provide information on 
the tail support size parameter $m$ but we can assume $m$ is known to
the estimator. These measurements also provide information on $M^{(t)}_j$, the number of keys in the support that hash to our selected measurement from the each set of $b$ measurements. 
This because their sum has distribution
$\mathcal{B}(\mathcal{B}(m-M^{(t)}_j,1/b),1/2)$.  But in our analysis we assume that
$m$ is chosen to be large enough so that $M^{(t)}_j$ is very close to its expectation $m/b$ and we may also assume that the exact values $M^{(t)}_j$ are available to the estimator (Essentially, the variations in the value only impacts the "noise" magnitude but not its form).

\begin{lemma}
The distribution of the random variable $\sum_{j\in [\ell]}u^{*(t)}_j$ conditioned on $\rho$ and $\boldsymbol{M}^{(t)}$ is
\[
2 \mathcal{B}\left( \sum_{j\in[\ell]} M^{(t)}_j,1/2\right) - \sum_{j\in[\ell]} M^{(t)}_j\ .
\]
\end{lemma}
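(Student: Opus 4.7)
The plan is to expand the sum over a single index set, establish that each summand is a $\pm 1$ Rademacher random variable, and verify their joint independence via the standard Rademacher moment criterion; a sum of $N$ i.i.d.\ Rademachers has distribution $2\mathcal{B}(N,1/2)-N$, giving the claim with $N = \sum_j M^{(t)}_j$.

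Concretely, I would first unfold
\[
  u^{*(t)}_j \;=\; \sum_{i \in \supp(\boldsymbol{z}^{(t)}) \,:\, h_j(i)=h_j(h)} z^{(t)}_i\, s_j(i)\, s_j(h),
\]
and introduce $\mathcal{I} := \{(i,j) : i\in\supp(\boldsymbol{z}^{(t)}),\ h_j(i)=h_j(h)\}$, so that
\[
  \sum_{j \in [\ell]} u^{*(t)}_j \;=\; \sum_{(i,j)\in\mathcal{I}} R_{ij},\qquad R_{ij} := z^{(t)}_i\, s_j(i)\, s_j(h),
\]
with $|\mathcal{I}| = \sum_j M^{(t)}_j$. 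Conditioning on $\rho$ and $\boldsymbol{M}^{(t)}$ freezes the index set $\mathcal{I}$; the residual randomness consists of the independent $\pm 1$ sign factors $s_j(i)$ attached to the distinct pairs in $\mathcal{I}$ together with the independent $\pm 1$ tail values $z^{(t)}_i$, so each individual $R_{ij}$ is Rademacher.

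The main step is joint independence of $\{R_{ij}\}_{(i,j)\in\mathcal{I}}$. I would invoke the Rademacher moment criterion: a family of $\pm 1$ variables is mutually independent iff every nonempty subset-product has expectation zero. For any nonempty $T \subseteq \mathcal{I}$, the product $\prod_{(i,j)\in T} R_{ij}$ contains each factor $s_j(i)$ with $(i,j)\in T$ exactly once (exponent $1$). Because the $s_j(i)$'s for distinct pairs in $\mathcal{I}$ are independent mean-zero Rademachers, and are independent of all remaining factors in the product, the expectation vanishes; thus the $R_{ij}$'s are jointly independent Rademachers, and their $N$-fold sum has exactly the claimed shifted-Binomial distribution.

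The main obstacle is this independence step, since the $R_{ij}$'s share factors in an intricate way: $R_{ij}$ and $R_{ij'}$ share $z^{(t)}_i$ whenever a single key $i$ is hashed to $h$'s bucket in two different rows, and $R_{ij}$ and $R_{i'j}$ share $s_j(h)$ whenever two tail keys land in $h$'s bucket of the same row $j$. The Rademacher moment test sidesteps this apparent coupling precisely because the per-pair sign $s_j(i)$ is never reused across $\mathcal{I}$, so any nontrivial subset-product inherits at least one independent mean-zero factor.
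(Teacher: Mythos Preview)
There is a genuine gap in the conditioning step. By the paper's definition, $\rho$ includes the sign functions $s_j$, so conditioning on $\rho$ makes every $s_j(i)$ a fixed constant in $\{-1,+1\}$; they cannot be treated as ``residual randomness.'' With the signs frozen, $R_{ij}=z^{(t)}_i\,s_j(i)\,s_j(h)$ is simply $\pm z^{(t)}_i$, and the only live randomness is in the tail values $z^{(t)}_i$. These are shared across rows: if a key $i$ lands in $h$'s bucket in two different rows $j,j'$, then $R_{ij}$ and $R_{ij'}$ are deterministically equal or opposite, so for $T=\{(i,j),(i,j')\}$ the product $\prod_T R_{ij}=\pm 1$ has nonzero expectation and your Rademacher moment test fails. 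Indeed, under literal conditioning on $\rho$ and the support, $\sum_j u^{*(t)}_j=\sum_i c_i\, z^{(t)}_i$ with fixed integer weights $c_i=\sum_{j:(i,j)\in\mathcal{I}} s_j(i)s_j(h)$ that need not be $\pm 1$, so the shifted-binomial law is not exact in that conditioning.

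The paper's own proof is a one-liner that never descends to the $R_{ij}$ level: it simply invokes the preceding lemma, which already establishes that the $u^{*(t)}_j$ are independent across $j$ with conditional law $2\mathcal{B}(M^{(t)}_j,1/2)-M^{(t)}_j$, and then uses that a sum of independent $\mathcal{B}(\cdot,1/2)$ variables is again binomial. The independence across $j$ in that preceding lemma relies precisely on the signs $s_j$ being random; in effect the conditioning intended here is on the bucket assignments (hence on $\boldsymbol{M}^{(t)}$) with the $s_j$ still live. Under that reading your moment argument does go through --- each $s_j(i)$ for $(i,j)\in\mathcal{I}$ is then a genuinely fresh independent Rademacher appearing exactly once --- but you must state explicitly that you are not conditioning on the sign part of $\rho$.
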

\begin{proof}
Taking the sum over $j\in[\ell]$ of the distributions of $u^{*(t)}_j$
\begin{align*}
    &\sum_{j\in[\ell]} \left( 2 \mathcal{B}(M^{(t)}_j,1/2)- M^{(t)}_j\right) \\
     =& 2 \sum_{j\in[\ell]} \mathcal{B}(M^{(t)}_j,1/2)- \sum_{j\in[\ell]} M^{(t)}_j \\
     =& 2\mathcal{B}( \sum_{j\in[\ell]} M^{(t)}_j,1/2) - \sum_{j\in[\ell]} M^{(t)}_j\ ,
\end{align*}
where the last equality follows from properties of a sum of independent Binomial random variables. 
\end{proof}

\begin{remark} \label{mchoice:rem}
We choose $m$ to be large enough so that 
the following approximations hold:
\begin{itemize}
    \item [(i)]
The independent random variables 
$M^{(t)}_j \sim \mathcal{B}(m, 1/b)$ are close to their mean $m/b$ for all $t\in[r]$ and $j\in [\ell]$: More precisely, for some $\alpha$ and $\delta'$, with probability $(1-\delta')$, for all $t,j$, $(M^{(t)}_j \in 1\pm \alpha) m/b$. We bound this using Chernoff bounds. We solve
$2\exp(- \frac{m}{b} \alpha^2/3 \leq \delta'/(r \ell )$ to obtain
$m \geq 3 b \alpha^{-2} \ln(2r\ell/\delta')$.
\item [(ii)]
 With probability $(1-\delta')$, $r \ell$ random variables
$\mathcal{B}(n,1/2)$ where $n$ is in $(1\pm \alpha) m/b$, are all within
$O(n^{2/3} \alpha)$ from their expected value $n/2$.
When this holds, the normal approximation to the Binomial distribution has relative error within $1\pm \alpha$.
We again apply Chernoff bounds solving for
$2\exp(- (n/2) (n^{-1/3} \alpha)^2/3 \leq \delta'/(r \ell ) $
solving we get
$n^{1/3} \geq 6\alpha^{-2} \ln(2r\ell/\delta')$ and therefore
$m = O(b \alpha^{-6} \ln(2r\ell/\delta')^3)$.
\end{itemize}
\end{remark}

When $m/b$ is large in the sense of Remark~\ref{mchoice:rem}, the distribution of  $u^{*(t)}_j$ is approximated well by $\mathcal{N}(0, m/b)$  and 
$\boldsymbol{u}^{*(t)}$ is approximated by $\mathcal{N}_\ell(0, m/b)$
and therefore
\begin{eqnarray}
\boldsymbol{u}^{(t)} &\sim& \mathcal{N}_\ell(v^{(t)}_h,\frac{m}{b})  \label{udist:eq}\\
\overline{u} &\sim& \mathcal{N}(v^{(t)}_h, \frac{m}{b} \cdot\frac{1}{\ell})\ .
\end{eqnarray}
The approximation is up to a 
relative error of
$\alpha=\alpha'/\ell$ and discretization.  We use $\alpha = \alpha'/\ell$ so that a relative error approximation of $1\pm \alpha'$ holds also with respect to a product distribution of $\ell$ such random variables.
For the purpose of cleaner presentation, we will use this approximation going forward. We use
$\sigma := \sqrt{\frac{m}{b}}$.
The analysis carries over by carefully dragging the approximation error term that vanishes for large $m$.

\section{Estimators}
\label{sec:estimators}

In this section we 
provide a framework of estimators and 
establish properties common to any correct $\ell_2$-$\epsilon$ heavy hitters estimator that is applied to our query vectors.

In its most general form, a query response algorithm fixes before each query $t$ an estimator $M^{(t)}$.
The estimator is applied to the content of the sketch, which on our inputs are i.i.d.\ vectors
$(\boldsymbol{u}^{(t)}\sim \mathcal{N}_\ell(v^{(t)}_h,\sigma^2))_{t\in[r]}$. The estimator is
specified by a {\em reporting function}
$p^{(t)}: \mathbb{R}^\ell \to [0,1]$ so that
$p^{(t)}(\boldsymbol{u}^{(t)}) := \Pr[h\in M^{(t)}(\boldsymbol{u}^{(t)})]$
specifies the probability that the returned $K^{(t)}$ includes key $h$ when the sketch content is $\boldsymbol{u}^{(t)}$.  We allow the query response algorithm to modify the estimator
arbitrarily between queries and in a way that depends on sketches of prior inputs, prior outputs, and on a maintained state from past queries $(\boldsymbol{u}^{(t')},K^{(t')})_{t'<t}$.
The only constraint that we impose is that (at each step $t$) the output is correct with high probability: 
$\ell_2$-$\epsilon$-heavy hitters are reported and $0$ value keys are not reported (see Definition~\ref{HHproblem:def}).
We show that a correct estimator on our query inputs must satisfy the following:
\begin{lemma} [Correct HH estimator basic property] \label{correctest:lemma}
For $\delta_1,\delta_2 \ll 1$, 
there are 
$a = \Theta(\sqrt{\frac{\ln(1/\delta_2)}{\ell}})$
and $c = \Theta(1)$ so that the following holds.
If the  
estimator satisfies (i) if $h$ is a heavy hitter then $\Pr[h\in K] \geq 1-\delta_1$ and (ii) if
$v_h=0$ then $\Pr[h\not\in K] \geq 1-\delta_2$.
Then 
\begin{itemize}
    \item $|v_h| \geq c\cdot\sigma$ $\implies$ $\Pr[h\in K] \geq 1-\delta_1$
    \item $|v_h| \leq a\cdot\sigma$ $\implies$ $\Pr[h\not\in K] \geq 1- \frac{1}{\delta_2^{\Omega(1)}}$
    \item Otherwise, unrestricted 
\end{itemize}
\end{lemma}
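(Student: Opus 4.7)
The plan is to split into the three bullets and handle them separately: the first is an immediate consequence of the heavy-hitter definition applied to our query vectors, the third is vacuous, and the second is the heart of the lemma, handled via a change-of-measure / second-moment argument transferring the $v_h = 0$ guarantee to a small positive-$v_h$ neighborhood. Throughout, I will represent the (possibly randomized) estimator by its reporting function $p : \mathbb{R}^\ell \to [0,1]$ with $p(\boldsymbol{u}) = \Pr[h \in K \mid \boldsymbol{u}]$, and use the Gaussian approximation from Section~\ref{sketchdist:sec} that the relevant sketch content distributes as $\boldsymbol{u} \sim \mathcal{N}_\ell(v_h, \sigma^2)$; the $(1\pm\alpha)$ approximation error will be absorbed into the hidden constants of $a$ and $c$.

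For the first bullet I translate the heavy-hitter condition into one on $|v_h|/\sigma$. Since $\|\boldsymbol{v}^{(t)}\|_2^2 = v_h^2 + m$, key $h$ is $\epsilon$-heavy iff $v_h^2 \geq \epsilon m/(1-\epsilon)$; using $\sigma^2 = m/b$ this becomes $|v_h| \geq \sigma \sqrt{\epsilon b/(1-\epsilon)}$. Under the standard CountSketch setting $b = \Theta(1/\epsilon)$ with $\epsilon$ bounded away from $1$ (as in Remark~\ref{classic:rem}), the right-hand side is $\Theta(\sigma)$, so setting $c := \sqrt{\epsilon b/(1-\epsilon)} = \Theta(1)$ makes $|v_h| \geq c\sigma$ sufficient for $h$ to be a heavy hitter, at which point hypothesis (i) yields $\Pr[h \in K] \geq 1-\delta_1$.

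For the second bullet, let $f_v$ denote the density \eqref{pdfnormal:eq} of $\mathcal{N}_\ell(v, \sigma^2)$ and set $R(\boldsymbol{u}) := f_{v_h}(\boldsymbol{u})/f_0(\boldsymbol{u})$. Changing measure and applying Cauchy--Schwarz together with $p^2 \leq p$ gives
\begin{equation*}
\Pr_{\boldsymbol{u}\sim f_{v_h}}[h\in K] \;=\; \E_{f_0}\!\left[R(\boldsymbol{u})\,p(\boldsymbol{u})\right] \;\leq\; \sqrt{\E_{f_0}[R^2]}\cdot\sqrt{\E_{f_0}[p]} \;\leq\; \sqrt{\E_{f_0}[R^2]\cdot \delta_2},
\end{equation*}
invoking hypothesis (ii) in the last step. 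A direct expansion from \eqref{pdfnormal:eq} gives $R(\boldsymbol{u}) = \exp\!\bigl(v_h\sum_{j\in[\ell]} u_j/\sigma^2 - \ell v_h^2/(2\sigma^2)\bigr)$, and the Gaussian MGF ($\E[\exp(t u_j)] = \exp(t^2\sigma^2/2)$ applied with $t = 2v_h/\sigma^2$) yields $\E_{f_0}[R^2] = \exp(\ell v_h^2/\sigma^2)$. Substituting $|v_h| \leq a\sigma$ and choosing $a := \sqrt{\ln(1/\delta_2)/(2\ell)} = \Theta(\sqrt{\ln(1/\delta_2)/\ell})$,
\begin{equation*}
\Pr[h\in K] \;\leq\; \sqrt{\delta_2}\,\exp(\ell a^2/2) \;=\; \delta_2^{1/2}\cdot \delta_2^{-1/4} \;=\; \delta_2^{1/4},
\end{equation*}
so $\Pr[h\notin K] \geq 1 - \delta_2^{\Omega(1)}$, and the third bullet imposes no constraint.

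The main technical point is choosing the second-moment (Cauchy--Schwarz) route rather than a cruder estimate: a Pinsker-type total-variation bound between $f_0$ and $f_{v_h}$ gives only $\mathrm{TV} \lesssim |v_h|\sqrt{\ell}/\sigma$, which to keep $\Pr_{f_{v_h}}[h\in K] = \delta_2^{\Omega(1)}$ would force $a = O(\delta_2^{\Omega(1)}/\sqrt{\ell})$, collapsing the ``uncertainty window'' $[a\sigma, c\sigma]$ on which the rest of the attack relies. Paying the multiplicative factor $\sqrt{\E_{f_0}[R^2]} = \exp(\ell a^2/2)$ instead absorbs a power of $1/\delta_2$ while still leaving a polynomial-in-$\delta_2$ tail bound, which is precisely what permits $a = \Theta(\sqrt{\ln(1/\delta_2)/\ell})$ and matches the claimed scaling.
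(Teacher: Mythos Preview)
Your proof is correct and reaches the same constants and scaling as the paper. The treatment of the first and third bullets is identical to the paper's.

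For the second bullet, the underlying idea is the same---use the likelihood ratio $f_{v_h}/f_0$ to transfer the $v_h=0$ guarantee---but the execution differs. The paper observes that this ratio depends on $\boldsymbol{u}$ only through the sufficient statistic $\overline{u}$, so the worst-case reporting function (maximizing $\Pr_{f_{v_h}}[h\in K]$ subject to $\Pr_{f_0}[h\in K]\le\delta_2$) is a one-sided threshold on $\overline{u}$; it then reads off the answer from two Gaussian tail probabilities. Your Cauchy--Schwarz route bypasses the Neyman--Pearson reduction entirely: you never need to identify $\overline{u}$ as sufficient or argue about the optimal test, and the second-moment computation $\E_{f_0}[R^2]=\exp(\ell v_h^2/\sigma^2)$ is a one-line MGF calculation. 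Both approaches yield the same exponent $\delta_2^{1/4}$ at $a=\sqrt{\ln(1/\delta_2)/(2\ell)}$; yours is slightly shorter and more robust (it would work verbatim for any product measure with a tractable MGF), while the paper's makes the extremal reporting function explicit, which is conceptually informative but not needed downstream. Your closing remark contrasting Cauchy--Schwarz with a Pinsker/TV bound is a nice addition that the paper does not include.
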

\begin{proof}
The tail vectors have $\ell_2^2$ mass $\|\boldsymbol{z} \|^2_2=m$ and
 $\|\boldsymbol{v}^{(t)} \|^2_2=m + (v^{(t)}_h)^2$.  Key $h$ is $\ell_2$ $\epsilon$-heavy hitter (see Definition~\ref{HH:def}) when
\begin{align*}
        v_h^2 \geq \epsilon\cdot \|\boldsymbol{v}\|_2^2 = \epsilon \cdot (v_h^2 + \|\boldsymbol{z}\|_2^2) \implies \\
        v_h^2 \geq \frac{\epsilon}{1-\epsilon} \|\boldsymbol{z}\|_2^2 =  \frac{\epsilon}{1-\epsilon} m = \frac{b \epsilon}{1-\epsilon}\cdot \sigma \ .
  \end{align*} 
 Recall that the sketch uses $b= O(1/\varepsilon)$, so when $c\approx \frac{b \epsilon}{1-\epsilon} = \Theta(1)$ we get that $|v_h| \geq c\cdot\sigma$ implies that $h$ is $\ell_2$ $\epsilon$-heavy hitter.  Therefore an estimator that reports a heavy key with probability at least $1-\delta_1$ must have $\Pr[h\in K]\geq 1-\delta_1$.
 
 We now establish the second part.  Here we show that if the probability of reporting when $v_h=0$ is at most $\delta_2$ then the probability of reporting 
when $|v_h|\leq a\cdot \sigma$ is at most $\delta_2^{1/4}$.
The probability density of a sketch $\boldsymbol{u}$ when $v_h=v$ is:
    \[
f_v(\boldsymbol{u}) = 
\prod_{i\in[\ell]} \frac{1}{\sigma\sqrt{2\pi}} e^{-\frac{1}{2} \left(\frac{u_i-v}{\sigma}\right)^2}\ ,
\]
We seek to bound the maximum the maximum reporting probability when $v_h \leq a$ subject to reporting probability of $0$ that is at most $\delta_2$.  Consider the ratio
\[
\frac{f_v(\boldsymbol{u})}{f_{0}(\boldsymbol{u})} = e^{-\frac{1}{2 \sigma^2}\left( \ell\cdot v^2 - 2\cdot v\cdot \sum_i u_i \right)} = e^{-\frac{1}{2 \sigma^2}\left( \ell\cdot (v^2 - 2 v \overline{u}) \right)}
\]
Note that the ratio depends only on $\overline{u}$ and $v$ and is larger when 
$\overline{u}$ is closer to $v$ than to $0$. Therefore, for the goal of bounding the maximum reporting probability with $v$ subject on a bound on the probability for $0$, we place
the reporting probability when $0$ on the largest values of $\overline{u}$. The distribution of $\overline{u}$ with $0$ mean is $\mathcal{N}(0,\sigma^2/\ell)$.
For $\overline{u} \geq \frac{\sigma}{\sqrt{\ell}} \sqrt{\ln(1/\delta_2)}$ provides reporting probability $\Theta(1/\delta_2)$ for $0$.  For $v = \frac{1}{2} \frac{\sigma}{\sqrt{\ell}} \sqrt{\ln(1/\delta_2)}$ the reporting probability is $\frac{1}{\delta_2^{\Omega(1)}}$.

\end{proof}

\section{Simplified Mean Estimation}\label{simplifiedminest:sec}

In this section we formulate and state properties of an interaction, described as
Algorithm~\ref{algo:meanestattack}, 
with a mean estimator of i.i.d\ random variables.  
We then show a correspondence between
Algorithm~\ref{algo:attack} 
and
Algorithm~\ref{algo:meanestattack}
establish properties of Algorithm~\ref{algo:attack} through this correspondence. In Section~\ref{AMS:sec} we apply this formulation to analyse an attack on norm estimators applied to the AMS sketch.

We use the following definition:
\begin{definition} [correct reporting function] \label{correctmean:def}
A reporting function
$p: \mathbb{R}^\ell \to [0,1]$ 
is {\em correct} with respect to
parameters: $(\delta,a,c,\ell,\sigma)$ if
\begin{align*}
    \forall v\geq c\cdot\sigma,\ 
\E_{\boldsymbol{u} \sim \mathcal{N}_\ell(v,\sigma^2)} [ p^{(t)}(\boldsymbol{u})] \geq 1-\delta\\
    \forall v\leq a\cdot\sigma,\ 
\E_{\boldsymbol{u} \sim \mathcal{N}_\ell(v,\sigma^2)} [ p^{(t)}(\boldsymbol{u})] \leq \delta\ .
\end{align*}
\end{definition}

A correct reporting function can be viewed as a simple mean estimator applied to $\ell$ i.i.d.\ samples from $\mathcal{N}(v,\sigma^2)$:
With  probability $1-\delta$, the output is $1$ when
$|v|> c\cdot\sigma$ and $-1$ when $|v| < a\cdot\sigma$.

We will establish the following (proof provided in Section~\ref{minestproofs:sec}):
\begin{lemma} [\texttt{MeanEstAttack} Properties]
\label{biasminest:lemma}
Consider Algorithm~\ref{algo:meanestattack}  where
$b$ is such that
\[
\sqrt{\frac{\ell}{2\pi}} e^{-\ell b^2/2} \ll \frac{1}{c-a+2b} .
\]
Then the output $\boldsymbol{u}^{(A)} \in \mathbb{R}^\ell$  satisfies
 \begin{align}
     \E_{\mathcal{A}}\left[\overline{u^{(A)}}\right]
       &\approx \frac{r}{\ell} \cdot \frac{2 \sigma}{c-a+2b} \label{biasB:eq} \\
     \frac{r}{\ell}\sigma^2 \left(1 - \frac{2b}{c-a+2b}  \right)^2 &\lessapprox \Var_{\mathcal{A}}\left[\overline{u^{(A)}}\right] \lessapprox \frac{r}{\ell}\sigma^2 \label{biasvarB:eq}
  \end{align}
  \end{lemma}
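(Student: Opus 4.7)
The plan is to reduce the claim to a single-iteration computation and then scale by $r$ using independence. Since Algorithm~\ref{algo:meanestattack} is the direct analog of Algorithm~\ref{algo:attack}, the output has the form $\boldsymbol{u}^{(A)} = \sum_{t \in [r]} s^{(t)} \boldsymbol{u}^{*(t)}$, where $\boldsymbol{u}^{*(t)} := \boldsymbol{u}^{(t)} - v^{(t)} \mathbf{1}_\ell$ is independent of $v^{(t)}$ and distributed as $\mathcal{N}_\ell(0, \sigma^2)$ by \eqref{udist:eq}. The pairs $(v^{(t)}, \boldsymbol{u}^{*(t)})$ are i.i.d.\ across $t$ and the reporting randomness is fresh in each iteration, so the summands $s^{(t)} \boldsymbol{u}^{*(t)}$ are i.i.d.\ and $\E[\overline{u^{(A)}}] = r \E[s\, \overline{u^*}]$, $\Var[\overline{u^{(A)}}] = r \Var[s\, \overline{u^*}]$ for a generic iteration. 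I write $\tau := \sigma/\sqrt{\ell}$ and let $\phi_\tau$ denote the $\mathcal{N}(0, \tau^2)$ density.

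The main calculation is $\E[s\, \overline{u^*}] = \E[(2p(\boldsymbol{u}) - 1)\, \overline{u^*}]$, where $v \sim \mathcal{U}[a\sigma, (c+2b)\sigma]$, $\boldsymbol{u}^* \sim \mathcal{N}_\ell(0, \sigma^2)$ is independent of $v$, and $\boldsymbol{u} = v \mathbf{1}_\ell + \boldsymbol{u}^*$. Writing this as an iterated integral, swapping the order of integration, and changing variable to $y := v + \overline{u^*} = \overline{u}$, the inner $v$-integral has the closed form
\[
\int_{a\sigma}^{(c+2b)\sigma} (y - v)\, \phi_\tau(y - v)\, \frac{dv}{(c-a+2b)\sigma}
  = \frac{\tau^2}{(c-a+2b)\sigma}\bigl(\phi_\tau(y - (c+2b)\sigma) - \phi_\tau(y - a\sigma)\bigr),
\]
using that the antiderivative of $x\, \phi_\tau(x)$ is $-\tau^2 \phi_\tau(x)$. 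Recognizing the two terms as the marginal densities of $\overline{u}$ at the two endpoints of the interval yields
\[
\E[s\, \overline{u^*}] = \frac{\tau^2}{(c-a+2b)\sigma}\Bigl( \E_{\boldsymbol{u} \sim \mathcal{N}_\ell((c+2b)\sigma, \sigma^2)}[2 p(\boldsymbol{u}) - 1] - \E_{\boldsymbol{u} \sim \mathcal{N}_\ell(a\sigma, \sigma^2)}[2 p(\boldsymbol{u}) - 1] \Bigr).
\]
By correctness of $p$ (Definition~\ref{correctmean:def}), the first bracketed expectation is $\geq 1 - 2\delta$ (since $(c+2b)\sigma \geq c\sigma$) and the second is $\leq -1 + 2\delta$, while the trivial bound $|2p - 1| \leq 1$ caps the bracketed difference by $2$. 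Hence $\E[s\, \overline{u^*}] \approx 2\tau^2 / ((c-a+2b)\sigma) = 2\sigma / (\ell (c-a+2b))$, and multiplying by $r$ gives \eqref{biasB:eq}.

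For the variance, the identity $s^2 = 1$ gives $\E[(s\, \overline{u^*})^2] = \E[(\overline{u^*})^2] = \tau^2 = \sigma^2/\ell$ exactly, so $\Var[s\, \overline{u^*}] = \sigma^2/\ell - (\E[s\, \overline{u^*}])^2$. The upper bound $\Var[\overline{u^{(A)}}] \lessapprox r\sigma^2/\ell$ is then immediate, and the matching lower bound in \eqref{biasvarB:eq} follows by plugging in any upper estimate on $|\E[s\, \overline{u^*}]|$ coming from the same single-integration formula. The main obstacle I anticipate is bookkeeping the approximation errors: (i) the Binomial-to-Gaussian step from Section~\ref{sketchdist:sec}, which Remark~\ref{mchoice:rem} controls by the choice of $m$; (ii) the $\delta$-slack in the correctness bounds; and (iii) the hypothesis $\sqrt{\ell/(2\pi)}\, e^{-\ell b^2/2} \ll 1/(c-a+2b)$, which I expect is precisely what is needed so that the $\tau$-spread of $\overline{u}$ around an endpoint $v$ does not leak appreciable probability into the ``uncertainty region'' $(a\sigma, c\sigma)$ where $p$ is uncontrolled. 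Each of these errors is additive and should be absorbable into the $\approx$ signs in the statement.
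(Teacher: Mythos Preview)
Your argument is correct and takes a genuinely different route from the paper. The paper proves \eqref{biasB:eq} via a ``Gap Lemma'': it conditions on $|\alpha^{(t)}|=\alpha$, uses Gaussian symmetry to rewrite $\pi(v\mid\alpha)-\pi(v\mid-\alpha)$ as $\pi(v\mid\alpha)-\pi(v-2\alpha\sigma\mid\alpha)$, telescopes the $v$-integral to the boundary, and finally integrates over $\alpha$; the hypothesis on $b$ is what lets the paper discard the range $\alpha>b$ where the telescoping breaks down. You instead integrate $v$ out first in closed form using the Stein-type identity $\int x\,\phi_\tau(x)\,dx=-\tau^2\phi_\tau(x)$, which collapses the whole calculation to the exact formula
\[
\E[s\,\overline{u^*}]=\frac{\sigma}{\ell(c-a+2b)}\bigl(2\pi((c+2b)\sigma)-2\pi(a\sigma)\bigr),
\]
and then invoke correctness only at the two endpoints. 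This is strictly cleaner: it never conditions on $\alpha$, never truncates a tail, and in fact does not use the hypothesis on $b$ at all (your hedge in item~(iii) is unnecessary---the only approximation error in your expectation formula is the $\delta$-slack from Definition~\ref{correctmean:def}). Your variance argument via $s^2=1$ is likewise tighter than the paper's bound, which routes through the Gap Lemma again.

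One caveat you and the paper share: the summands $s^{(t)}\overline{u^{*(t)}}$ are \emph{not} literally i.i.d.\ when the query-response algorithm chooses $p^{(t)}$ adaptively from the history $(\boldsymbol{u}^{(t')},s^{(t')})_{t'<t}$, so ``$\Var=\sum_t\Var$'' needs a word of justification. Your own identity supplies it: for every correct $p$, the conditional mean of $s\,\overline{u^*}$ given the history lies in a fixed interval of width $O(\delta\sigma/\ell)$ and the conditional second moment is exactly $\sigma^2/\ell$; hence the increments form a near-martingale difference sequence and the cross terms contribute only $O(\delta)$ relative error, which is absorbed by the $\approx$ signs.
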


\begin{algorithm2e}
{\small
    \caption{\texttt{MeanEstAttack}}
    \label{algo:meanestattack}
    \DontPrintSemicolon
    \KwIn{Parameters $(a,c,\sigma,\delta)$, number of queries $r$, $b\in (0,a]$, A query response algorithm $\mathcal{A}$ that chooses $(a,c,\sigma,\delta)$-correct reporting functions}
    \For(\tcp*[h]{Generate queries and reporting functions}){$t\in [r]$}
    {
      $\mathcal{A}$ chooses a $(a,c,\sigma,\delta)$-correct reporting function $p^{(t)}:\mathbb{R}^\ell\to [0,1]$ \tcp*{Definition~\ref{correctmean:def}. Choice may depend on $(\boldsymbol{u}^{(t')},s^{(t')})_{t'<t}$}
      $v^{(t)}\sim \mathcal{U}[a\cdot \sigma, (c+2b)\cdot \sigma]$\;
      $\boldsymbol{u}^{*(t)} \sim \mathcal{N}_\ell(0,\sigma^2)$\;
      \lFor(\tcp*[h]{Compute $\boldsymbol{u}^{(t)}\in\mathbb{R}^\ell$, note that $\boldsymbol{u}^{(t)} \sim \mathcal{N}_\ell(v^{(t)},\sigma^2)$}){$j\in [\ell]$}{$u^{(t)}_j \gets v^{(t)} + u^{*(t)}_j$}
      $s^{(t)} \gets 1$ w.p.\ $p^{(t)}(\boldsymbol{u}^{(t)})$ and $s^{(t)} \gets -1$ otherwise
      }
     \Return{$\boldsymbol{u}^{(A)} \gets \sum_{t\in[r]} s^{(t)} \boldsymbol{u}^{*(t)} $}
        }
  \end{algorithm2e}

\subsection{Correspondence between the attack on \texttt{CountSketch} and \texttt{MeanEstAttack}}
We state the attack on $\countsketch$  in
Algorithm~\ref{algo:attack} in terms of 
Algorithm~\ref{algo:meanestattack}.

We first relate the
 correctness of an estimator to  correctness of its reporting function (proof is immediate):
 \begin{lemma} \label{correctp:coro}
If for parameters
$\delta,a,c,\ell,\sigma$ an estimator satisfies (i)
$|v_h| \geq c\cdot\sigma$ $\implies$ $\Pr[h\in K] \geq 1-\delta$ and
(ii) $|v_h| \leq a\cdot\sigma$ $\implies$ $\Pr[h\not\in K] \geq 1- \delta$, then its reporting function $p$ is correct 
with respect to parameters 
$(\delta,a,c,\ell,\sigma)$.
\end{lemma}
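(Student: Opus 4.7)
The plan is to unfold the definition of the reporting function and observe that the two sets of conditions are literally equivalent. By construction in Section~\ref{sec:estimators}, the reporting function is $p(\boldsymbol{u}) := \Pr[h\in M(\boldsymbol{u})]$, where the probability is taken over any internal randomness of the estimator $M$ when applied to a fixed sketch content $\boldsymbol{u}$. Consequently, for any distribution on the relevant sketch content the law of total probability gives $\Pr[h\in K] = \E_{\boldsymbol{u}}[p(\boldsymbol{u})]$.

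Next I would invoke the analysis of Section~\ref{sketchdist:sec}: on a query vector of the form $\boldsymbol{v} = v\boldsymbol{e}_h + \boldsymbol{z}$ with a random tail, the $\ell$ adjusted measurements of key $h$ are distributed (up to the approximation quantified in Remark~\ref{mchoice:rem}) as $\boldsymbol{u}\sim\mathcal{N}_\ell(v,\sigma^2)$, and the remaining entries of the sketch carry no additional information relevant to the heavy-hitter test beyond what $\rho$, $m$, and $\boldsymbol{M}^{(t)}$ already reveal (as argued right after the distributional lemma). Combining this with the previous display yields
\begin{equation*}
\Pr\bigl[h\in K \,\big|\, v_h = v\bigr] \;=\; \E_{\boldsymbol{u}\sim\mathcal{N}_\ell(v,\sigma^2)}\bigl[p(\boldsymbol{u})\bigr].
\end{equation*}

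With this identity in hand, the conclusion is a direct rewrite. Hypothesis (i) says that for every $|v|\geq c\sigma$ the left-hand side is at least $1-\delta$, which is precisely the first clause of Definition~\ref{correctmean:def}. Hypothesis (ii) says that for every $|v|\leq a\sigma$ we have $\Pr[h\notin K]\geq 1-\delta$, i.e.\ $\Pr[h\in K]\leq \delta$; by the identity this becomes $\E_{\boldsymbol{u}\sim\mathcal{N}_\ell(v,\sigma^2)}[p(\boldsymbol{u})]\leq \delta$, which is the second clause of Definition~\ref{correctmean:def}. Hence $p$ is $(\delta,a,c,\ell,\sigma)$-correct.

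There is essentially no technical obstacle; the only subtlety worth naming is that in Algorithm~\ref{algo:attack} the query-response algorithm is allowed to adapt $M^{(t)}$ (and thus $p^{(t)}$) to the history $(v^{(t')}_h, K^{(t')}, M^{(t')})_{t'<t}$ and to $\rho$. However, Lemma~\ref{correctp:coro} is invoked per step after conditioning on that history, at which point $p^{(t)}$ is a deterministic function of $\boldsymbol{u}^{(t)}$ and the equivalence above applies verbatim; no additional argument is required.
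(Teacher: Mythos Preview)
Your proposal is correct and matches the paper's own treatment: the paper simply states that the ``proof is immediate,'' and what you have written is exactly the unfolding of definitions (reporting function from Section~\ref{sec:estimators}, sketch distribution from Section~\ref{sketchdist:sec}, and Definition~\ref{correctmean:def}) that makes this immediacy explicit. The remark on per-step conditioning for adaptive $p^{(t)}$ is a helpful clarification but, as you note, requires no additional argument.
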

It follows that an estimator that satisfies the conditions of Lemma~\ref{correctest:lemma} has a reporting function that is correct as in Definition~\ref{correctmean:def} for parameters $(\max\{\delta_1,\delta_2^{\Omega(1)}\},a,c,\ell,\sigma)$.

In the correspondence between
 Algorithm~\ref{algo:attack} and
Algorithm~\ref{algo:meanestattack},
the reporting function  $M^{(t)}$ corresponds to 
$p^{(t)}$,  the values $v^{(t)}_h$  to $v^{(t)}$, 
the adjusted measurements of key $h$ from the sketch of  $\boldsymbol{v}^{(t)}$
correspond to $\boldsymbol{u}^{(t)}$, the adjusted measurements of
key $h$ from the sketch of  $\boldsymbol{z}^{(t)}$ corresponds to $\boldsymbol{u}^{*(t)}$, and  $s^{(t)}$ plays the same role. 
The adjusted measurements of the output of Algorithm~\ref{algo:attack}, adversarial vector
$\boldsymbol{z}^{(A)}$, correspond to 
the output $\boldsymbol{u}^{(A)}$ of Algorithm~\ref{algo:meanestattack}.
The parameter values $a,c$ are the same and the parameter $b$ in
Algorithm~\ref{algo:meanestattack} 
(unrelated to the sketch parameter) is set to $b\gets a$.

Using this correspondence, we can now provide a proof of
Lemma~\ref{tailprop:lemma}
(Properties of the adversarial tail).
The Lemma follows as a Corollary of Lemma~\ref{biasminest:lemma} using the established relation between Algorithm~\ref{algo:attack} and Algorithm~\ref{algo:meanestattack} and $b=a =\Theta(\sqrt{\ln(1/\delta_2)/\ell}$.
 When $\delta_2$ as used to determine $a$ in Lemma~\ref{correctest:lemma} is small enough,  we have
 $\sigma\cdot \sqrt{\frac{2}{\pi \ell}} e^{- \ell a^2/2} \approx \sigma \cdot \sqrt{\frac{2\cdot \delta_2}{\pi \ell}}$ and the condition of Lemma~\ref{biasminest:lemma} holds.

 \begin{proof} [Proof of Lemma~\ref{tailprop:lemma}]
  The contributions of the adversarial tail to the adjusted estimates in the sketch are:
\begin{align*}
  \langle \boldsymbol{z}^{(A)}, \boldsymbol{\mu}^{(j)}\rangle \cdot \mu^{(j)}_h =
  \sum_{t\in [r]} s^{(t)} \langle \boldsymbol{z}^{(t)}, \boldsymbol{\mu}^{(j)}\rangle \cdot \mu^{(j)}_h
  = 
  \sum_{t\in [r]} s^{(t)} (u^{(t)}_j-v^{(t)}_h)
  = 
  \sum_{t\in [r]} s^{(t)} u^{*(t)}_j
    \ .
\end{align*} 
Therefore
\begin{align}
  \frac{1}{\ell}\sum_{j\in[\ell]}\langle \boldsymbol{z}^{(A)}, \boldsymbol{\mu}^{(j)}\rangle \cdot \mu^{(j)}_h &=\frac{1}{\ell}\sum_{j\in[\ell]}
  \sum_{t\in [r]} s^{(t)} u^{*(t)}_j \label{ave:eq}\\
  &= \sum_{t\in [r]} s^{(t)} \frac{1}{\ell}\sum_{j\in[\ell]}  u^{*(t)}_j \nonumber\\
  &= \sum_{t\in [r]} s^{(t)} \overline{u^{*(t)}} = \overline{u^{(A)}}   \nonumber\ .
\end{align} 

Using Lemma~\ref{biasminest:lemma} and noting that our settings of $a,c$ (see Lemma~\ref{correctest:lemma}) have $c-a \gg a$ we obtain the claim:
 \begin{align*}
     \E_{\rho,\mathcal{A}}\left[\frac{1}{\ell}\sum_{j\in [\ell]} \langle \boldsymbol{z}^{(A)}, \boldsymbol{\mu}^{(j)}\rangle \cdot \mu^{(j)}_h\right] &\approx \frac{r}{\ell} \cdot \frac{2\sigma}{c+a}\\
      \Var_{\rho,\mathcal{A}}\left[\frac{1}{\ell}\sum_{j\in [\ell]} \langle \boldsymbol{z}^{(A)}, \boldsymbol{\mu}^{(j)}\rangle \cdot \mu^{(j)}_h\right] &\approx \frac{r}{\ell} \cdot \sigma^2
  \end{align*}
\end{proof}

\section{Proofs for \texttt{MeanEstAttack} Properties} \label{minestproofs:sec}

We provide the proof details for Lemma~\ref{biasminest:lemma}.
At a high level, we show that for each  $t\in [r]$, the value $s^{(t)}$ is correlated with the sign of the normalized bias of the sample:
\begin{equation} \label{normalizedbias:eq}
    \alpha^{(t)} := \frac{\overline{u^{*(t)}}}{\sigma} =  \frac{\overline{u^{(t)}}-v^{(t)}}{\sigma}\ .
\end{equation}

For $t\in [r]$ with reporting function $p$  we define (superscript $(t)$ omitted for readability from $p$, $\pi$)
\begin{align}
  \pi(v) &:= \Pr[s^{(t)} = 1 \mid v^{(t)} = v] = \E_{\boldsymbol{u}\sim \mathcal{N}_\ell(v,\sigma^2)} [p(\boldsymbol{u})] \label{pidef:eq}\\
  \pi(v\mid \alpha) &:= \Pr\left[s^{(t)}=1 \mid v^{(t)} = v \textbf{ and } \frac{\overline{u}-v}{\sigma}=\alpha \right]=\E_{\boldsymbol{u}\sim \mathcal{N}_\ell(v,\sigma^2) \mid \frac{\overline{u}-v}{\sigma}=\alpha} [p(\boldsymbol{u})] \label{picond:eq}
\end{align}
The probability of $s^{(t)}=1$
 when $v^{(t)} = v$ and
the probability of $s^{(t)}=1$ when $v^{(t)}=v$ {\em and } $\alpha^{(t)} = \alpha$.
Note that since $\overline{u} \sim \mathcal{N}(v,\sigma^2/\ell)$ we have $\alpha^{(t)} \sim \mathcal{N}(0,1/\ell)$.

 We will establish the following:
\begin{lemma} [Gap Lemma] \label{gap:coro}
For $0< \alpha \leq b\leq a$ (with probability $1-O(\delta)$ over the distribution of $\alpha$) and for any correct reporting function $p$ (and respective $\pi$)
\[
\E_{v\sim \mathcal{U}[(a ,(c+2b)\cdot\sigma ]} [\pi(v \mid \alpha) - \pi(v \mid -\alpha) ] \approx \frac{2\alpha}{c-a+2b} .
\]
\end{lemma}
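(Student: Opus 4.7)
The plan is to reduce both sides to the behavior of a single one-parameter ``conditional'' function and then to control it at the endpoints of the $v$-interval. The core identity is a translation symmetry: conditioning $\boldsymbol{u}\sim\mathcal{N}_\ell(v,\sigma^2)$ on $\overline{u}=v+\alpha\sigma$ yields the Gaussian on the hyperplane $\{\overline{u}=v+\alpha\sigma\}$ centered at $(v+\alpha\sigma)\boldsymbol{1}_\ell$, which is exactly the law of $\boldsymbol{u}\sim\mathcal{N}_\ell(v+\alpha\sigma,\sigma^2)$ conditioned on $\overline{u}=v+\alpha\sigma$. Defining $q(v'):=\E[p(\boldsymbol{u})\mid \boldsymbol{u}\sim\mathcal{N}_\ell(v',\sigma^2),\,\overline{u}=v']\in[0,1]$, this gives $\pi(v\mid\alpha)=q(v+\alpha\sigma)$, so that a two-variable object collapses to a shift of a single univariate function.

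Substituting this into the uniform average over $v\in[a\sigma,(c+2b)\sigma]$ and changing variables, the ``middle'' pieces of the two shifted integrals cancel, leaving
\[
\E_v[\pi(v\mid\alpha)-\pi(v\mid-\alpha)] = \frac{1}{(c-a+2b)\sigma}\left[\int_{(c+2b-\alpha)\sigma}^{(c+2b+\alpha)\sigma} q(u)\,du - \int_{(a-\alpha)\sigma}^{(a+\alpha)\sigma} q(u)\,du\right].
\]
Since $q\in[0,1]$ and each interval has length $2\alpha\sigma$, this immediately yields the upper bound $2\alpha/(c-a+2b)$.

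For the matching lower bound I will use that $\pi(v)$ is the convolution of $q$ with the density $f(\alpha')=\sqrt{\ell/(2\pi)}\,e^{-\ell\alpha'^2/2}$ of $\mathcal{N}(0,1/\ell)$. The two correctness inequalities $\pi(a\sigma)\le\delta$ and $\pi((c+2b)\sigma)\ge1-\delta$, restricted to the window $|\alpha'|\le\alpha$ where $f(\alpha')\ge f(\alpha)$ by monotonicity of $f$ on $[0,\infty)$, yield
\[
\int_{(a-\alpha)\sigma}^{(a+\alpha)\sigma} q \le \frac{\delta\sigma}{f(\alpha)}, \qquad \int_{(c+2b-\alpha)\sigma}^{(c+2b+\alpha)\sigma} q \ge 2\alpha\sigma - \frac{\delta\sigma}{f(\alpha)}.
\]
These combine to give the lower bound $(2\alpha-2\delta/f(\alpha))/(c-a+2b)$. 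The hypothesis $\sqrt{\ell/(2\pi)}\,e^{-\ell b^2/2}\ll 1/(c-a+2b)$ together with $\alpha\le b$ forces the $2\delta/f(\alpha)$ error to be lower-order than $2\alpha$ on the $(1-O(\delta))$-probability range of $\alpha$, matching the upper and lower bounds into the claimed $\approx$.

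The main obstacle is the translation identity in the first step. It hinges on recognizing that conditioning a spherical Gaussian on its sample mean produces a distribution whose only dependence on the original center is an overall shift of the hyperplane, so the underlying center $v$ and the bias $\alpha\sigma$ enter only through the sum $v+\alpha\sigma$. Without this collapse, $\pi(\cdot\mid\cdot)$ would remain genuinely two-dimensional and the telescoping trick would be unavailable; once it is in place, the rest is routine change of variables plus a one-sided Gaussian restriction argument driven by the given tail condition on $b$.
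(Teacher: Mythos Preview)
Your proof is correct and follows essentially the same route as the paper. Your single-variable function $q(w)$ with the identity $\pi(v\mid\alpha)=q(v+\alpha\sigma)$ is exactly the content of the paper's symmetry lemma (Lemma~\ref{sym:lemma}), repackaged so that the telescoping computation (the paper's Lemma~\ref{correctestavg:lemma}) can be carried out directly on $q$; your convolution-restriction bound for the endpoints is a slightly more explicit version of the paper's ``for almost all $\alpha$, $\pi(v\mid\alpha)\approx 0$ at the left and $\approx 1$ at the right'' Markov argument. One small remark: the hypothesis $\sqrt{\ell/(2\pi)}\,e^{-\ell b^2/2}\ll 1/(c-a+2b)$ you invoke at the end belongs to Lemma~\ref{biasminest:lemma} rather than to the Gap Lemma itself, and by itself does not quite give $\delta/f(\alpha)\ll\alpha$; what actually absorbs the error is the ``with probability $1-O(\delta)$ over $\alpha$'' clause in the statement, which lets you discard the small-$\alpha$ region where $\alpha f(\alpha)\lesssim\delta$ (this region has $O(\delta)$ mass under the half-normal).
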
 

This Lemma implies that when
$v^{(t)} \sim \sigma\cdot [a,c+2b]$ then any correct estimator would exhibit a gap in the probability of reporting $s^{(t)}=1$, depending on the sign of the normalized bias $\alpha^{(t)}$. The gap increases with the magnitude $|\alpha^{(t)}|$.  Note that the statement is on the expected  {\em difference} between the reporting probabilities for positive or negative bias. The expected probabilities 
$\E_{v\sim \mathcal{U}[(a ,(c+2b)\cdot\sigma ]} [\pi(v \mid \alpha)$ and
$\E_{v\sim \mathcal{U}[(a ,(c+2b)\cdot\sigma ]} [\pi(v \mid -\alpha)$ 
can vary wildly, be both close to $0$ or close to $1$, for different correct reporting functions.

\subsection{Proof of Lemma~\ref{gap:coro}}

We will use two Lemmas.  The first uses the fact that the reporting probability as a function of 
$v^{(t)}$ has total increase of almost $1$ between $v=a\cdot \sigma$  to $v = c\cdot \sigma$ to establish its average increases over uniformly random sub-intervals of length $2\alpha\sigma$:
\begin{lemma} [Average increase property] \label{correctestavg:lemma}
For $0< \alpha \leq b \leq a$ (with probability $1-O(\delta)$ over the distribution)
\[
\E_{v\sim \mathcal{U}[(a ,(c+2b)\cdot\sigma ]} [\pi(v \mid \alpha) - \pi(v-2\cdot \alpha\cdot\sigma \mid \alpha) ] \approx \frac{2\alpha}{c-a+2b} .
\]
\end{lemma}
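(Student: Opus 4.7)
The plan is to reduce the claim to a one-dimensional estimate on a single scalar function. First, by spherical symmetry of $\mathcal{N}_\ell(v,\sigma^2)$: writing $\boldsymbol{u}=\overline{u}\boldsymbol{1}_\ell+\boldsymbol{w}$ with $\boldsymbol{w}\perp\boldsymbol{1}_\ell$, the residual $\boldsymbol{w}$ is independent of $\overline{u}$ and its distribution does not depend on $v$. Conditioning on $\overline{u}=v+\alpha\sigma$ therefore gives $\pi(v\mid\alpha)=\E_{\boldsymbol{w}}[p((v+\alpha\sigma)\boldsymbol{1}_\ell+\boldsymbol{w})]=:\phi(v+\alpha\sigma)$, depending only on the single value $v+\alpha\sigma$. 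In particular $\pi(v-2\alpha\sigma\mid\alpha)=\phi(v-\alpha\sigma)$, and the target expectation equals
\[
\frac{1}{(c-a+2b)\sigma}\int_{a\sigma}^{(c+2b)\sigma}\!\bigl[\phi(v+\alpha\sigma)-\phi(v-\alpha\sigma)\bigr]\,dv.
\]
A straightforward change of variables rewrites the bracket integral as the ``boundary'' difference
\[
\int_{(c+2b-\alpha)\sigma}^{(c+2b+\alpha)\sigma}\!\phi(u)\,du\;-\;\int_{(a-\alpha)\sigma}^{(a+\alpha)\sigma}\!\phi(u)\,du,
\]
so it suffices to show the first integral is approximately $2\alpha\sigma$ while the second is approximately $0$, since each window has length $2\alpha\sigma$ and the integrand lies in $[0,1]$.

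For both bounds I would exploit the identity $\pi(v)=\E_{\overline{g}\sim\mathcal{N}(0,1/\ell)}[\phi(v+\sigma\overline{g})]$ together with the correctness hypotheses of Definition~\ref{correctmean:def}. For the upper window, correctness gives $1-\pi(v)\leq\delta$ on $v\in[(c+b)\sigma,(c+3b)\sigma]\subseteq[c\sigma,\infty)$, so Fubini yields
\[
\E_{\overline{g}}\Big[\textstyle\int_{(c+b+\overline{g})\sigma}^{(c+3b+\overline{g})\sigma}(1-\phi(u))\,du\Big]=\textstyle\int_{(c+b)\sigma}^{(c+3b)\sigma}(1-\pi(v))\,dv\leq 2\delta b\sigma.
\]
On the event $|\overline{g}|\leq b-\alpha$ the shifted range contains $[(c+2b-\alpha)\sigma,(c+2b+\alpha)\sigma]$, so nonnegativity of $1-\phi$ and the Gaussian-tail hypothesis on $b$ give $\int_{(c+2b-\alpha)\sigma}^{(c+2b+\alpha)\sigma}(1-\phi)=O(\delta b\sigma)$, hence $\int\phi\geq 2\alpha\sigma-O(\delta b\sigma)$. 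For the lower window, since correctness only supplies $\pi\leq\delta$ for $v\leq a\sigma$, I would run the same argument on the \emph{asymmetric} window $[(a-2(b+\alpha))\sigma,a\sigma]$: for $\overline{g}\in[\alpha,2b+\alpha]$ (which has constant probability) the shifted range covers $[(a-\alpha)\sigma,(a+\alpha)\sigma]$, yielding $\int_{(a-\alpha)\sigma}^{(a+\alpha)\sigma}\phi=O(\delta b\sigma)$. Combining the two bounds and dividing by $(c-a+2b)\sigma$ then produces the claimed $\approx 2\alpha/(c-a+2b)$.

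The main obstacle is choreographing the asymmetric window on the low-$v$ side (forced by the one-sided correctness $\pi\leq\delta$, valid only for $v\leq a\sigma$) together with the symmetric tail event for $\overline{g}$ used on the high-$v$ side, and verifying that the hypothesis $\sqrt{\ell/(2\pi)}e^{-\ell b^2/2}\ll 1/(c-a+2b)$ is exactly what is needed to make the mass of $\overline{g}$ escaping the ``good'' range contribute only $o(\alpha/(c-a+2b))$ after normalization. The probabilistic qualifier ``with probability $1-O(\delta)$ over the distribution of $\alpha$'' in the statement is then used to restrict to $|\alpha|$ sufficiently smaller than $b$ so that the events $\{|\overline{g}|\le b-\alpha\}$ and $\{\overline{g}\in[\alpha,2b+\alpha]\}$ indeed have the required probability mass.
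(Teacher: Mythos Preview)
Your argument is correct and lands on the same telescoping identity as the paper, but the route to the boundary estimates differs.

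The paper's proof is a two-liner: it asserts (citing correctness) that for all but an $O(\delta)$-fraction of $\alpha$, the function $v\mapsto \pi(v\mid\alpha)$ itself is $\approx 0$ on $|v|\le a\sigma$ and $\approx 1$ on $|v|\ge c\sigma$; it then telescopes directly in the $v$-variable to obtain
\[
\frac{1}{(c-a+2b)\sigma}\Bigl(-\int_{(a-2\alpha)\sigma}^{a\sigma}\pi(v\mid\alpha)\,dv+\int_{(c+2b-2\alpha)\sigma}^{(c+2b)\sigma}\pi(v\mid\alpha)\,dv\Bigr)\approx\frac{2\alpha}{c-a+2b},
\]
using that the two boundary windows sit entirely inside the regions where the asserted $0/1$ behavior applies (since $0<\alpha\le b\le a$). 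The justification of that assertion is left implicit; it is really the observation---proved separately as Lemma~\ref{sym:lemma}---that $\pi(v\mid\alpha)$ depends only on $v+\alpha\sigma$, so for typical $\alpha$ the conditional function is just a small shift of a fixed profile.

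You make this explicit up front by introducing $\phi$ with $\pi(v\mid\alpha)=\phi(v+\alpha\sigma)$, telescope in the $\phi$-variable, and then recover the boundary behavior of $\phi$ from the correctness of $\pi=\E_{\bar g}[\phi(\cdot+\sigma\bar g)]$ via a Fubini/window argument. This is more work but more self-contained: you do not need to invoke the separate symmetry lemma, and the ``with probability $1-O(\delta)$ over $\alpha$'' qualifier acquires a concrete meaning (restricting $|\alpha|$ below $b$ so the Gaussian windows carry enough mass). The paper's shortcut is cleaner once one grants the conditional $0/1$ claim; your version spells out why that claim holds.
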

\begin{proof}
From Corollary~\ref{correctp:coro}, for almost all values of $\alpha$ (probability $1-O(\delta)$ over the distribution  of $\alpha\sim \mathcal{N}(0,1/\ell)$) we must have that
the function $\pi( v \mid \alpha)$ is very close (within a small constant of our choice) to $0$ when
$|v|\leq a\cdot \sigma$ and very close to $1$ when $|v| \geq c\cdot \sigma$.
\begin{align*}
& \E_{v\sim \mathcal{U}[a\cdot\sigma ,(c+2b)\cdot\sigma ]} [\pi(v  \mid \alpha) - \pi(v -2\alpha\cdot\sigma \mid \alpha) ]\\
&= \frac{1}{c-a+2b} \int_{a\cdot\sigma}^{(c+2b)\cdot\sigma } \pi(v \mid \alpha) -     \pi(v -2\alpha\cdot\sigma \mid \alpha) dv \\
&= \frac{1}{c-a+2b} \left(\int_{a\cdot\sigma}^{(c+2b )\cdot\sigma } \pi(v \mid \alpha) dv - \int_{(a-2\alpha)\cdot\sigma}^{(c+2b-2\alpha)\cdot\sigma } \pi(v \mid \alpha) dv \right)\\
&= \frac{1}{c-a+2b} \left(- \int_{(a-2\alpha)\cdot\sigma}^{a \cdot\sigma} \pi(v \mid \alpha) dv + \int_{(c+2b-2\alpha)\cdot\sigma}^{(c+2b)\cdot\sigma} \pi(v \mid \alpha) dv \right) \approx \frac{2\alpha}{c-a+2b}  \end{align*}
\end{proof}

\ignore{
\begin{lemma} [Average increase property] \label{correctestavg:lemma}
For any $0<\gamma, \alpha \leq a$ (with v.h.p over the distribution of $\alpha$)
\[
\E_{v\sim \mathcal{U}[(0 ,(c+a)\cdot\sigma ]} [\pi(v+\gamma\cdot \sigma \mid \alpha) - \pi(v-\gamma\cdot \sigma \mid \alpha) ] = \Theta(\gamma/c) .
\]
(in particular, this holds when not conditioning on $\alpha$).
\end{lemma}
\begin{proof}
From Lemma~\ref{correctest:lemma},
the function $\pi( v \mid \alpha)$ is very close to $0$ when
$|v|\in [0,a\cdot \sigma]$ and very close to $1$ when $v\geq c\cdot \sigma$.
Its mean growth over an sub-interval of length $2\gamma$ is therefore $\propto \frac{2\gamma}{c+a}$:
\begin{align*}
& \E_{v\sim \mathcal{U}[0,(c+a)\cdot\sigma ]} [\pi(v+\gamma\cdot \sigma \mid \alpha) - \pi(v-\gamma\cdot \sigma \mid \alpha) ]\\
&= \frac{1}{c+a} \int_{0}^{(c+a)\cdot\sigma } \pi(v +\gamma\cdot\sigma \mid \alpha) -     \pi(v -\gamma\cdot\sigma \mid \alpha) dv \\
&= \frac{1}{c+a} \left(\int_{(\gamma)\cdot\sigma}^{(c+a +\gamma)\cdot\sigma } \pi(v \mid \alpha) dv - \int_{(-\gamma)\cdot\sigma}^{(c+a-\gamma)\cdot\sigma } \pi(v \mid \alpha) dv \right)\\
&= \frac{1}{c+a} \left(- \int_{(-\gamma)\cdot\sigma}^{\gamma \cdot\sigma} \pi(v \mid \alpha) dv + \int_{(c+a-\gamma)\cdot\sigma}^{(c+a+\gamma)\cdot\sigma} \pi(v \mid \alpha) dv \right) \approx \frac{2\gamma}{c+a}  
\end{align*}
\end{proof}
}

The following Lemma shows that the probability of
reporting $s^{(t)}=1$, is equally likely when 
$v^{(t)} = v + \alpha\cdot\sigma$ and $\overline{u} = v$ and when $v^{(t)} = v - \alpha\cdot\sigma$ and $\overline{u} = v$:
\begin{lemma}\label{sym:lemma}
For any $\alpha>0$, reporting function $p()$, and $v$, 
$\pi(v +\alpha\cdot\sigma \mid -\alpha) = \pi(v -\alpha\cdot\sigma  \mid +\alpha)$.
\end{lemma}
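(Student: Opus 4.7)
The plan is to reduce both sides to an expectation of $p(\boldsymbol{u})$ under the same conditional distribution, then invoke the classical fact that for an i.i.d.\ Gaussian sample the conditional law given the sample mean does not depend on the unknown mean parameter.

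First I would unpack the definitions using~\eqref{picond:eq}. The left-hand side involves $v^{(t)} = v + \alpha\sigma$ and normalized bias $-\alpha$, so the conditioning event $(\overline{u}-v^{(t)})/\sigma = -\alpha$ reduces to $\overline{u} = v$. Hence
\[
\pi(v+\alpha\sigma \mid -\alpha) \;=\; \E_{\boldsymbol{u}\sim\mathcal{N}_\ell(v+\alpha\sigma,\sigma^2)\,\mid\, \overline{u}=v}[\,p(\boldsymbol{u})\,].
\]
Symmetrically, the right-hand side uses $v^{(t)} = v - \alpha\sigma$ and normalized bias $+\alpha$, which again gives $\overline{u} = v$, so
\[
\pi(v-\alpha\sigma \mid +\alpha) \;=\; \E_{\boldsymbol{u}\sim\mathcal{N}_\ell(v-\alpha\sigma,\sigma^2)\,\mid\, \overline{u}=v}[\,p(\boldsymbol{u})\,].
\]
Therefore it suffices to show that the two conditional laws of $\boldsymbol{u}$ coincide; the claim then follows for any reporting function $p$.

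Second, I would prove that for $\boldsymbol{u}\sim\mathcal{N}_\ell(v_0,\sigma^2)$ the conditional law given $\overline{u}=v$ is independent of $v_0$. Using the variance decomposition
\[
\sum_{i\in[\ell]} (u_i - v_0)^2 \;=\; \sum_{i\in[\ell]} (u_i - \overline{u})^2 \;+\; \ell\,(\overline{u} - v_0)^2,
\]
the joint density~\eqref{pdfnormal:eq} restricted to the hyperplane $\{\boldsymbol{u} : \overline{u} = v\}$ factors as $\exp\!\bigl(-\tfrac{1}{2\sigma^2}\sum_i (u_i-v)^2\bigr)$ times a $v_0$-dependent constant $\exp\!\bigl(-\tfrac{\ell(v-v_0)^2}{2\sigma^2}\bigr)$ that cancels upon normalization. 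So on that hyperplane the conditional density depends only on $v$, not on $v_0$. Instantiating this with $v_0 = v+\alpha\sigma$ and with $v_0 = v-\alpha\sigma$ yields the same conditional distribution, which implies equality of the two expectations.

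I do not anticipate any substantive obstacle: the lemma is essentially the statement that $\overline{u}$ is a sufficient statistic for the mean of a Gaussian sample with known variance $\sigma^2$. The only technical subtlety is that the conditioning event has probability zero, which is handled by the standard regular conditional distribution (disintegration) along the linear functional $\boldsymbol{u}\mapsto\overline{u}$, or equivalently by the orthogonal decomposition $\boldsymbol{u} = \overline{u}\cdot\boldsymbol{1}_\ell + (\boldsymbol{u}-\overline{u}\cdot\boldsymbol{1}_\ell)$ in which the centered part is independent of $\overline{u}$ and has a distribution not depending on $v_0$.
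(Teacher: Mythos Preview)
Your proposal is correct and follows essentially the same approach as the paper: both unpack the conditioning to $\overline{u}=v$ and then show the Gaussian density on that hyperplane is the same for means $v\pm\alpha\sigma$. The paper does this by a direct expansion with $u_i=\overline{u}+\Delta_i$ and $\sum_i\Delta_i=0$, arriving at the pointwise identity $f_{\overline{u}+\alpha\sigma}(\boldsymbol{u})=f_{\overline{u}-\alpha\sigma}(\boldsymbol{u})$; your variance-decomposition argument and the sufficiency-of-$\overline{u}$ framing are a slightly more conceptual packaging of the same computation.
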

\begin{proof}

From \eqref{picond:eq}:
\begin{align*}
\pi(v +\alpha\cdot\sigma \mid -\alpha) &= \frac{1}{\int_{\boldsymbol{u} \mid \overline{u}=v} f_{v +\alpha\cdot\sigma}(\boldsymbol{u}) d \boldsymbol{u} } \cdot 
\int_{\boldsymbol{u} \mid \overline{u}=v} f_{v +\alpha\cdot\sigma}(\boldsymbol{u}) \cdot p(\boldsymbol{u}) d \boldsymbol{u}\\
\pi(v -\alpha\cdot\sigma \mid \alpha) &= 
\frac{1}{\int_{\boldsymbol{u} \mid \overline{u}=v} f_{v +\alpha\cdot\sigma}(\boldsymbol{u}) d \boldsymbol{u} } \cdot 
\int_{\boldsymbol{u} \mid \overline{u}=v} f_{v -\alpha\cdot\sigma}(\boldsymbol{u}) \cdot p(\boldsymbol{u}) d \boldsymbol{u}
\end{align*}

The probability $\pi(v +\alpha\cdot\sigma \mid -\alpha)$ is the mean over $\boldsymbol{u}$ with $\overline{u}=v$ of $f_{v +\alpha\cdot\sigma}(\boldsymbol{u}) \cdot p(\boldsymbol{u})$.  Similarly, $\pi(v -\alpha\cdot\sigma \mid \alpha)$ is the mean over $\boldsymbol{u}$ with $\overline{u}=v$ of $f_{v -\alpha\cdot\sigma}(\boldsymbol{u}) \cdot p(\boldsymbol{u})$.
It therefore suffices to establish that
$f_{v -\alpha\cdot\sigma}(\boldsymbol{u})= f_{v +\alpha\cdot\sigma}(\boldsymbol{u})$ for $\boldsymbol{u}$ when $\overline{u}=v$ .

This is a 
property of a product of normal distributions: for any $\boldsymbol{u} \in \mathbb{R}^\ell$ and $\alpha$, the outcome
$\boldsymbol{u}$
is equally likely with  $\mathcal{N}_\ell(\overline{u}+\alpha\cdot\sigma,\sigma^2 )$ and $\mathcal{N}_\ell(\overline{u}-\alpha\cdot\sigma,\sigma^2)$:
\[
f_{\overline{u}+\alpha\cdot\sigma}(\boldsymbol{u}) =f_{\overline{u}-\alpha\cdot\sigma}(\boldsymbol{u}).
\]

We express $\boldsymbol{u}$ in terms of the empirical mean
$\overline{u} := \frac{1}{\ell} \sum_j u_j$
and
the "deviations" from the empirical mean
$(\Delta_j =  u_j - \overline{u})_{j=1}^{\ell}$.
Note that by definition $\sum_{j\in [\ell]} \Delta_j = 0$. 
Let $\alpha >0$,
using~\eqref{pdfnormal:eq} and $\sum_i \Delta_i=0$ we obtain:
\begin{align*}
    f_{\overline{u}+\alpha\cdot\sigma}(\boldsymbol{u}) 
    =& \prod_{i\in[\ell]} \frac{1}{\sigma\sqrt{2\pi}} \exp{-\left(\frac{\overline{u}+\Delta_i -(\overline{u}+\alpha\cdot\sigma)}{\sigma}\right)^2}\\
    =& \frac{1}{(\sigma\sqrt{2\pi})^\ell}\exp\left(\frac{-\sum_{i=1}^\ell (\Delta_i^2 + \alpha^2\cdot\sigma^2 -2 \alpha\cdot\sigma\Delta_i)}{\sigma^2} \right)\\
    =& \frac{1}{(\sigma\sqrt{2\pi})^\ell}\exp\left(\frac{-\ell \alpha^2\cdot\sigma^2 - \sum_{i=1}^\ell \Delta_i^2 +2 \alpha\cdot\sigma\cdot \sum_{i=1}^\ell \Delta_i)}{\sigma^2} \right)\\
    =& \frac{1}{(\sigma\sqrt{2\pi})^\ell}\exp\left(\frac{-\ell \alpha^2\cdot\sigma^2 - \sum_{i=1}^\ell \Delta_i^2}{\sigma^2} \right)=\frac{1}{(\sigma\sqrt{2\pi})^\ell}\exp\left(-\ell \alpha^2 - \frac{\sum_{i=1}^\ell \Delta_i^2}{\sigma^2} \right)
\end{align*}
The calculation for $f_{\overline{u}-\alpha\cdot\sigma}(\boldsymbol{u})$ is similar and  yields the same expression:
\begin{align*}
    f_{\overline{u}-\alpha\cdot\sigma}(\boldsymbol{u}) 
    =& \prod_{i\in[\ell]} \frac{1}{\sigma\sqrt{2\pi}} \exp{-\left(\frac{\overline{u}+\Delta_i -(\overline{u}-\alpha\cdot\sigma)}{\sigma}\right)^2}\\
    =& \frac{1}{(\sigma\sqrt{2\pi})^\ell}\exp\left(\frac{-\sum_{i=1}^\ell (\Delta_i^2 + \alpha^2\cdot\sigma^2 +2 \alpha\cdot\sigma\Delta_i)}{\sigma^2} \right)\\
    =& \frac{1}{(\sigma\sqrt{2\pi})^\ell}\exp\left(\frac{-\ell \alpha^2\cdot\sigma^2 - \sum_{i=1}^\ell \Delta_i^2 -2 \alpha\cdot\sigma\cdot \sum_{i=1}^\ell \Delta_i)}{\sigma^2} \right)\\
    =& \frac{1}{(\sigma\sqrt{2\pi})^\ell}\exp\left(-\ell \alpha^2 - \frac{\sum_{i=1}^\ell \Delta_i^2}{\sigma^2} \right)
\end{align*}

\end{proof}

Lemma~\ref{gap:coro} follows as a corollary of 
Lemma~\ref{correctestavg:lemma} and Lemma~\ref{sym:lemma}: From Lemma~\ref{correctestavg:lemma} we get that
$\pi(v-2\cdot \alpha\cdot\sigma \mid \alpha) =  \pi(v \mid -\alpha)$.  By making this substitution in the statement of Lemma~\ref{sym:lemma} we obtain Lemma~\ref{gap:coro}.

\subsection{Output properties} 

In this section we provide a proof of Lemma~\ref{biasminest:lemma} that states properties of the expectation and variance of the average entry value of the output vector. Additionally, we bound from below the $\ell_2$ norm (Lemma~\ref{squaredbound:lemma}).  We use the bound on the $\ell_2$ norm in Section~\ref{AMS:sec}.

\begin{proof} [Proof of Lemma~\ref{biasminest:lemma}]

From linearity of expectation,
\begin{equation} \label{claimE:eq}
\E_{\mathcal{A}}\left[\overline{u^{(A)}}\right] = 
   \E\left[\sum_{t\in [r]} s^{(t)}(\overline{u}^{(t)}-v^{(t)}))\right]=
\sigma \E\left[\sum_{t\in[r]} s^{(t)} \alpha^{(t)}\right] = \sigma \sum_{t\in [r]} \E[s^{(t)} \alpha^{(t)}] ,
\end{equation}
where $\alpha^{(t)}$ is the normalized bias \eqref{normalizedbias:eq}.
We now consider the contribution 
of a single query $t \in [r]$:
\begin{equation} \label{onet:eq}
   \E[s^{(t)}\alpha^{(t)}]\ .
\end{equation}

Recall that the distribution of  
$\alpha^{(t)}$
 is 
$\mathcal{N}(0,1/\ell)$.
We condition on
$|\alpha^{(t)}| = \alpha$ for $\alpha \geq 0$ which has the density function
\begin{equation} \label{evennormal:eq}
2\sqrt{\frac{\ell}{2\pi}}e^{-\frac{1}{2}\ell \alpha^2} .
\end{equation}

We have $s^{(t)}\alpha^{(t)}= \alpha$ if
$\alpha^{(t)} = \alpha$
and $s^{(t)}=1$ or when 
$\alpha^{(t)} = \alpha = -\alpha$
and $s^{(t)}=-1$.  Similarly,  $s^{(t)}\alpha^{(t)} = -\alpha$ if
$\alpha^{(t)} = \alpha$
and $s^{(t)}=-1$ or 
when 
$\alpha^{(t)} = -\alpha$
and $s^{(t)}=1$.

We now express the expected contribution under this conditioning, for fixed $v^{(t)}$ and fixed correct reporting function $p$.  We use that since the distribution of $\alpha^{(t)}$ (for fixed $v^{(t)}$) is symmetrical, there is equal probability $1/2$ for each of
$\alpha^{(t)} = \pm \alpha$. We use
$\pi(v\mid\alpha)$ as defined in
\eqref{picond:eq}.
\begin{align*}
\lefteqn{\E_{p}[s^{(t)}\alpha^{(t)} \mid \alpha^{(t)} =\alpha ] =\E_{\mathcal{A}}[s^{(t)}(\frac{\overline{u}^{(t)}-v^{(t)}}{\sigma}) \mid \frac{|\overline{u}^{(t)}-v^{(t)}|}{\sigma}=\alpha ] =}\\
&= \frac{1}{2} \alpha \left(\pi(v^{(t)} \mid \alpha)-\pi(v^{(t)} \mid -\alpha) -(1-\pi(v^{(t)} \mid \alpha)) + (1-\pi(v^{(t)}\mid -\alpha))\right) \\
&=  \frac{1}{2} \alpha \left(2\pi(v^{(t)}\mid\alpha) - 2\pi(v^{(t)} \mid -\alpha)\right) =  \alpha  \left(\pi(v^{(t)}\mid\alpha) - \pi(v^{(t)} \mid -\alpha)\right)
\end{align*}

We take the expectation over $v^{(t)} \sim \mathcal{U}[a\cdot\sigma,(c+2b)\cdot\sigma]$ and apply Lemma~\ref{gap:coro}. We obtain that for $\alpha\in [0,b]$:
\begin{align*}
\lefteqn{\E_{v^{(t)},p}[s^{(t)} \alpha^{(t)} \mid \alpha^{(t)} =\alpha ] =}\\
&= \alpha  \E_{v\sim \mathcal{U}[a\cdot\sigma,(c+2b)}[\pi(v\mid\alpha) - \pi(v \mid -\alpha)] \approx \alpha \frac{2\alpha}{c-a+2b} = \frac{2\alpha^2}{c-a+2b}  
\end{align*}

Finally, we now take the expectation over $\alpha \geq 0$ using the density
\eqref{evennormal:eq}:
\begin{align}
\E_{v^{(t)},p}[s^{(t)} \alpha^{(t)}] &=
\int_0^\infty (2\cdot\sqrt{\frac{\ell}{2\pi}}e^{-\frac{1}{2}\ell \alpha^2})
\E_{v,p}[s^{(t)}\alpha^{(t)} \mid |\alpha^{(t)}|=\alpha ] 
d\alpha \nonumber \\
&\approx  \frac{2}{c-a+2b} 2\cdot\sqrt{\frac{\ell}{2\pi}} \int_0^\infty \alpha^2 e^{-\frac{1}{2}\ell \alpha^2} d\alpha \nonumber \\
&=  \frac{4}{c-a+2b} \cdot\sqrt{\frac{\ell}{2\pi}} \ell^{-3/2} \sqrt{\frac{\pi}{2}} = \frac{1}{\ell} \cdot \frac{2}{c-a+2b} \label{pertE:eq}
  \end{align}
Noting that the approximation holds as the 
contribution of $\alpha\in [b,\infty)$ is negligible when the condition on $b$ holds: 
\begin{align*}
& \int_b^\infty (2\cdot\sqrt{\frac{\ell}{2\pi}}e^{-\frac{1}{2}\ell \alpha^2}) \left| \E_{v^{(t)},p}[s^{(t)} \alpha^{(t)}  \mid |\alpha^{(t)}| =\alpha ] \right | d\alpha  \\
&\leq \int_b^\infty  (2\cdot\sqrt{\frac{\ell}{2\pi}}e^{-\frac{1}{2}\ell \alpha^2}) \alpha d\alpha \leq  \sqrt{\frac{2}{\pi \ell}} e^{- \ell b^2/2} 
\end{align*}

Substituting \eqref{pertE:eq} in  \eqref{claimE:eq} we establish  claim \eqref{biasB:eq} of the Lemma:
$\E_{\mathcal{A}}[\overline{u^{(A)}}] \approx  \frac{r}{\ell} \sigma \cdot \frac{2}{c-a+2b}$.

We now bound (using independence of $s^{(t)}\alpha^{(t)}$ for different $t$ and thus linearity of the variance)
\begin{equation} \label{claimV:eq}
\Var_{\mathcal{A}}\left[\overline{u^{(A)}}\right] = 
   \Var\left[\sum_{t\in [r]} s^{(t)}(\overline{u}^{(t)}-v^{(t)}))\right]=
\sigma^2 \Var\left[\sum_{t\in[r]} s^{(t)} \alpha^{(t)}\right] = \sigma^2 \sum_{t\in [r]} \Var[s^{(t)} \alpha^{(t)}] .
\end{equation}

We bound
  \begin{equation} \label{vart:eq}
\Var[s^{(t)}\alpha^{(t)}] .
\end{equation}
We condition on
$|\alpha^{(t)}| = \alpha$ and take expectation over $v$, applying Lemma~\ref{gap:coro}:
We have $s^{(t)}\alpha^{(t)} = \alpha$ with probability
\[
\frac{1}{2}\E_v[\pi(v \mid \alpha) - (1-\pi(v \mid -\alpha))]= \frac{1}{2}+ \frac{\alpha}{c-a+2b}
\]
and (similarly)
$s^{(t)}\alpha^{(t)} = - \alpha$ with probability $\frac{1}{2}- \frac{\alpha}{c-a+2b}$. 
We obtain
\begin{align}
    \lefteqn{\Var[s^{(t)}\alpha^{(t)} \mid |\alpha^{(t)} | = \alpha] =}\nonumber\\
    &= \E[(s^{(t)}\alpha^{(t)})^2\mid |\alpha^{(t)} | = \alpha  ] -
    \E[s^{(t)}\alpha^{(t)}\mid |\alpha^{(t)} | = \alpha ]^2 = \alpha^2 \left(1 - \left(\frac{2\alpha}{c-a+2b}\right)^2 \right)\ .\label{varalpha:eq}
\end{align} 

Since $\alpha \in [0,b]$ 
we have
\[
1 - \frac{2b}{c-a+2b} \leq 1 - \frac{2\alpha}{c-a+2b} \leq 1\ .
\]
Take the expectation of \eqref{varalpha:eq} over $\alpha \geq 0$  using these bounds and density
\eqref{evennormal:eq} we obtain
\begin{align}
    \frac{1}{\ell} \left(1 - \frac{2b}{c-a+2b}  \right)^2 \leq \Var_{v^{(t)},p}[s^{(t)}\alpha^{(t)}] \leq \frac{1}{\ell}\ .\label{bound:eq}
\end{align}
Claim \eqref{biasvarB:eq} in the statement of the Lemma follows by substituting \eqref{bound:eq}  in \eqref{claimV:eq}.
In the regime where 
$c-a \gg b$, $\frac{2b}{c-a+2b}$ is small and we have
\begin{align*}
\Var_{v^{(t)},p}[s^{(t)}\alpha^{(t)}] \approx \frac{1}{\ell}\ .
\end{align*}

 \end{proof}
 
We will use the following in our analysis of the attack on the AMS sketch: 
 \begin{lemma} \label{squaredbound:lemma}
 Under the conditions of Lemma~\ref{biasminest:lemma}, the following holds:
\begin{align}
     \E_{\mathcal{A}}\left[\frac{1}{\ell}\|\boldsymbol{u}^{(A)}\|^2_2 \right]  
     &\geq  
r\sigma^2\left(1-O(\frac{1}{\sqrt{\ell}}) + \Omega\left(\frac{r}{\ell^2} \left(\frac{2}{c-a+2b}\right)^2 \right) \right) \label{biasSqB:eq}
\end{align}      
  \end{lemma}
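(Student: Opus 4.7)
The plan is to expand the squared norm and analyze each cross term by conditioning on the history. Writing $\boldsymbol{u}^{(A)} = \sum_{t \in [r]} s^{(t)} \boldsymbol{u}^{*(t)}$ gives
\begin{equation*}
\|\boldsymbol{u}^{(A)}\|_2^2 = \sum_{t \in [r]} \|\boldsymbol{u}^{*(t)}\|_2^2 + 2\sum_{t_1 < t_2} s^{(t_1)} s^{(t_2)} \langle \boldsymbol{u}^{*(t_1)}, \boldsymbol{u}^{*(t_2)}\rangle.
\end{equation*}
Under the Gaussian approximation of Section~\ref{sketchdist:sec}, each diagonal term satisfies $\E[\|\boldsymbol{u}^{*(t)}\|_2^2] = \ell\sigma^2$, so the diagonal alone contributes exactly $r\sigma^2$ to $\E[\tfrac{1}{\ell}\|\boldsymbol{u}^{(A)}\|_2^2]$, matching the leading $r\sigma^2$ baseline in the statement.

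For each off-diagonal term with $t_1 < t_2$ I would condition on the history through step $t_2-1$, which determines the reporting function $p^{(t_2)}$, and use the tower property to obtain
\begin{equation*}
\E\left[s^{(t_1)} s^{(t_2)} \langle \boldsymbol{u}^{*(t_1)}, \boldsymbol{u}^{*(t_2)}\rangle\right] = \E\left[s^{(t_1)} \langle \boldsymbol{u}^{*(t_1)}, \boldsymbol{q}^{(t_2)}\rangle\right],
\end{equation*}
where $\boldsymbol{q}^{(t_2)} := \E[s^{(t_2)} \boldsymbol{u}^{*(t_2)} \mid p^{(t_2)}]$. Decompose $\boldsymbol{q}^{(t_2)} = \bar{q}^{(t_2)}\boldsymbol{1} + \tilde{\boldsymbol{q}}^{(t_2)}$ with $\tilde{\boldsymbol{q}}^{(t_2)} \perp \boldsymbol{1}$. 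The key input is that the single-query analysis already used inside the proof of Lemma~\ref{biasminest:lemma} (namely Lemma~\ref{gap:coro} applied to $p^{(t_2)}$) forces the scalar $\bar{q}^{(t_2)} = \E[s^{(t_2)} \overline{u^{*(t_2)}} \mid p^{(t_2)}] \approx \frac{2\sigma}{\ell(c-a+2b)}$ as a universal constant, for every correct choice of $p^{(t_2)}$. The $\boldsymbol{1}$-component then factors cleanly as $\ell\bar{q}^{(t_2)} \cdot \E[s^{(t_1)}\overline{u^{*(t_1)}}] \approx \frac{4\sigma^2}{\ell(c-a+2b)^2}$ per pair, and summing over the $r(r-1)$ ordered pairs and dividing by $\ell$ produces the claimed $\Omega\left(\frac{r}{\ell^2}\left(\frac{2}{c-a+2b}\right)^2\right) \cdot r\sigma^2$ term.

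The main obstacle is controlling the orthogonal residual $\sum_{t_1 < t_2} \E[s^{(t_1)} \langle \boldsymbol{u}^{*(t_1)}, \tilde{\boldsymbol{q}}^{(t_2)}\rangle]$, since the correctness constraint pins down only the average $\bar{q}^{(t_2)}$ and bounds $\tilde{\boldsymbol{q}}^{(t_2)}$ only softly via the Jensen estimate $\|\tilde{\boldsymbol{q}}^{(t_2)}\|_2^2 \leq \E[\|\tilde{\boldsymbol{u}}^{*(t_2)}\|_2^2 \mid p^{(t_2)}] \leq \ell\sigma^2$. I would exploit that $\boldsymbol{u}^{*(t_1)}$ is symmetric about $0$ and independent of $p^{(t_1)}$, together with the martingale-like cancellation of the $s^{(t_1)}$ signs across $t_1$, to absorb the residual sum into the $O(1/\sqrt{\ell})\cdot r\sigma^2$ slack on the baseline. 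Approximation errors from Lemma~\ref{biasminest:lemma} itself and from the Gaussian approximation of Remark~\ref{mchoice:rem} are likewise propagated into the same $O(1/\sqrt{\ell})$ slack.
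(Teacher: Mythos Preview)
Your decomposition into diagonal and cross terms, with the cross terms split into a $\boldsymbol{1}$-parallel piece and an orthogonal residual, is a genuinely different route from the paper's proof. The paper instead writes $\E\bigl[\tfrac{1}{\ell}\|\boldsymbol{u}^{(A)}\|_2^2\bigr]=\tfrac{1}{\ell}\sum_j\bigl(\Var[u^{(A)}_j]+\E[u^{(A)}_j]^2\bigr)$, bounds the mean-squared piece by $\E[\overline{u^{(A)}}]^2$ via Jensen together with the already-established bias claim~\eqref{biasB:eq}, and bounds the variance piece from below by a direct ``worst-case $s^{(t)}$'' argument. Your $\boldsymbol{1}$-parallel computation is essentially the same content as the paper's mean-squared step, and it does give the $\Omega\bigl(\tfrac{r}{\ell^2}(\tfrac{2}{c-a+2b})^2\bigr)$ term.

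The gap is in your handling of the orthogonal residual. The appeals to ``symmetry of $\boldsymbol{u}^{*(t_1)}$'' and ``martingale-like cancellation of the $s^{(t_1)}$ signs'' do not apply here: the vector $\tilde{\boldsymbol{q}}^{(t_2)}$ is determined by $p^{(t_2)}$, which the query-response algorithm chooses \emph{after} seeing the full history, in particular after seeing every $s^{(t_1)}\boldsymbol{u}^{*(t_1)}$ for $t_1<t_2$. Correctness constrains only $\bar q^{(t_2)}$; in the uncertainty region the adversary is free to let $p^{(t_2)}$ depend on $\langle\boldsymbol{u},\boldsymbol{w}\rangle$ for any $\boldsymbol{w}\perp\boldsymbol{1}$, which makes $\tilde{\boldsymbol{q}}^{(t_2)}$ a vector of norm $\Theta(\sigma)$ in an adversarially chosen direction, say $-\sum_{t_1<t_2}s^{(t_1)}\tilde{\boldsymbol{u}}^{*(t_1)}/\|\cdot\|$. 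There is no independence or sign-symmetry left to cancel this, and a naive Cauchy--Schwarz/Jensen bound on the resulting sum is far too loose to fit into the $O(1/\sqrt{\ell})\cdot r\sigma^2$ slack. The paper's per-coordinate variance argument sidesteps this by bounding $\sum_j\Var[u^{(A)}_j]$ directly against the worst-case choice of $s^{(t)}$ as a function of $\boldsymbol{u}^{*(t)}$, which is where the $1-O(1/\sqrt{\ell})$ factor actually comes from. To make your route go through you would need an argument of comparable strength for $\E[\|\tilde{\boldsymbol{u}}^{(A)}\|_2^2]$; what you have sketched does not supply one.
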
     
\begin{proof}
We now establish Claim~\eqref{biasSqB:eq}
\begin{align}
\E_{\mathcal{A}}\left[\frac{1}{\ell}\|\boldsymbol{u}^{(A)}\|^2_2 \right] &= \frac{1}{\ell} \sum_{j\in [\ell]}  \E_{\mathcal{A}}\left[(u^{(A)}_j)^2\right] \nonumber\\
&= \frac{1}{\ell} \sum_{j\in [\ell]} \left( \Var_{\mathcal{A}} \left[ u^{(A)}_j \right] + \E_{\mathcal{A}}\left[ u^{(A)}_j \right]^2 \right)\nonumber\\
&= \frac{1}{\ell} \sum_{j\in [\ell]}  \Var_{\mathcal{A}} \left[ u^{(A)}_j \right] + \frac{1}{\ell} \sum_{j\in [\ell]} \E_{\mathcal{A}}\left[ u^{(A)}_j \right]^2 \label{sqsubbound:eq} \ .
\end{align}

We apply Claim~\eqref{biasB:eq} to bound the second term:
\begin{align}
 \frac{1}{\ell} \sum_{j\in [\ell]}  \E_{\mathcal{A}}\left[ u^{(A)}_j \right]^2 &\geq \E_{\mathcal{A}}\left[\frac{1}{\ell} \sum_{j\in [\ell]} u^{(A)}_j \right] & \, \text{applying the inequality $E[X^2]\geq E[X]^2$}\nonumber \\
 &= \E_{\mathcal{A}}[\overline{u^{(A)}}]^2 \approx  \left(\frac{r}{\ell} \sigma^2 \cdot \frac{2}{c-a+2b}\right)^2 & \, \text{Claim~\eqref{biasB:eq}} \label{Esqbound:eq}
\end{align}

We now bound from below 
\begin{align}
    \sum_{j\in [\ell]} \Var_{\mathcal{A}} \left[ u^{(A)}_j \right] &=
    \sum_{j\in [\ell]} \Var_{\mathcal{A}}\left[ \sum_{t\in[r]}  s^{(t)} u^{*(t)}_j \right] \geq r\cdot\ell\cdot\sigma^2 \left(1 -O(\frac{1}{\sqrt{\ell}}) \right)\ . \label{Vsqbound:eq}
\end{align}
We recall that $u^{(t)}_j$ are i.i.d.\ $\mathcal{N}(0,\sigma)$.
We consider a bound where $s^{(t)}\in\{\pm 1\}$ is selected in a worst case way (as to minimize variance) according to the actual $\boldsymbol{u}^{*(t)}\sim \mathcal{N}_{\ell}(0,\sigma)$.
With $\ell=1$ the variance is minimized when the correlation are maximized which is done by having $s^{(t)}u^{*{t)}}$ all with the same sign, that is, behaving like i.i.d.\ $|\mathcal{N}(0,\sigma)|$ that have variance $(1- 2/\pi)\sigma^2$ per term with respective variance 
$r(1-2/\pi)\sigma^2$ for the sum. Since the same $s^{(t)}$ multiplies all entries in $\boldsymbol{u}^{*(t)}$, we can in expectation have at most $1/2+ \sqrt{\frac{2}{\pi \ell}}$ of these values with the same sign. Over the total of $r\cdot \ell$ terms we get
expected contribution of at least
$ 2 \sqrt{\frac{2}{\pi \ell}}\cdot (1- 2/\pi)\sigma^2 + (1- 2 \sqrt{\frac{2}{\pi \ell}})\cdot \sigma^2$.

Finally, we substitute \eqref{Esqbound:eq} and \eqref{Vsqbound:eq} in \eqref{sqsubbound:eq} and obtain the claim
\begin{align*}
    \frac{1}{\ell} \sum_{j\in [\ell]}  \E_{\mathcal{A}}\left[ u^{(A)}_j \right]^2 &\geq r\cdot\sigma^2 \left(1 -O(\frac{1}{\sqrt{\ell}}) \right) + \left(\frac{r}{\ell} \sigma^2 \cdot \frac{2}{c-a+2b}\right)^2 \\
    &= r\sigma^2 \left(1 - O(\frac{1}{\sqrt{\ell}}) + \frac{r}{\ell^2} (\frac{2}{c-a+2b})^2 \right)
\end{align*}

\end{proof}

\section{Extension: Universal Attack on the AMS Sketch} \label{AMS:sec}

\begin{algorithm2e}
{\small
    \caption{\texttt{Universal Attack on AMS norm estimation}}
    \label{algo:AMSattack}
    \DontPrintSemicolon
    \KwIn{$\tau$, $\epsilon$, $\delta$ \tcp*{Estimator parameters (Definition~\ref{correctnormest:def})}
    Initialized AMS $\sketch_\rho$ with parameters $(n,\ell)$, number of queries $r$, tail support size $m$}
    $\sigma \gets \tau \sqrt{1/2}$; $c \gets \sqrt{1+2\epsilon} $; $a \gets 1$;     $b\gets \Theta(\frac{1}{\sqrt{\ell}} \ln(\sqrt\frac{\ell}{\epsilon})$\;
    
    $(h,(\boldsymbol{z}^{(t)})_{t\in [r]}) \gets$
    \texttt{AttackTails}$(n,m,r)$\tcp*{Choose tails (Algorithm~\ref{algo:tails})}
    $(\boldsymbol{z}^{(t)} \gets \boldsymbol{z}^{(t)} \cdot \frac{\sigma}{\sqrt{m}})_{t\in [r]} $\tcp*{Rescale tails to have $\ell_2$ norm $\sigma$}
 \For(\tcp*[h]{Generate Query Vectors}){$t\in [r]$}{
    $v^{(t)}_h\sim \mathcal{U}[a\cdot \sigma, (c+2b)\cdot \sigma]$\;
    $\boldsymbol{v}^{(t)} \gets  v^{(t)}_h \boldsymbol{e}_h +
    \boldsymbol{z}^{(t)}$\tcp*{Query vectors}
    }
    \For(\tcp*[h]{Apply Query Response}){$t\in [r]$}
    {
      Choose an $(\epsilon,\delta,\tau)$-correct estimator $M^{(t)}$ \tcp*{Definition~\ref{correctnormest:def}, may depend on $(v^{(t')}_h,s^{(t')},M^{(t')})_{t'<t}$ and $\rho$ }
       $s^{(t)} \gets M^{(t)}(\sketch_\rho(\boldsymbol{v}^{(t)})$\tcp*{Apply estimator to sketch}
      }
     \Return{$\boldsymbol{z}^{(A)} \gets \sum_{t\in[r]} s^{(t)} \boldsymbol{z}^{(t)} $}\tcp*{Adversarial input}
    }
  \end{algorithm2e}

The AMS sketch~\cite{ams99} and the related Johnson Lindenstrauss transform~\cite{JLlemma:1984} are linear maps of input vectors $\boldsymbol{v} \in \mathbb{R}^n$ to their 
sketches in $\mathbb{R}^\ell$. The sketches support  recovery of $\ell_2$ norms and distances.
In this section we describe an attack on the AMS sketch that applies with any correct norm estimator.  

The sketch is specified by parameters $(n,\ell)$, 
where $n$ is the dimension of input vectors and $\ell$ is the number of measurements.
The internal randomness $\rho$ specifies a set of random hash functions 
$s_{j}:[n]\rightarrow \{-1,1\}$ ($ j\in [\ell]$) so that $\Pr[s_{j}(i)=1]=1/2$.  
These hash functions define $\ell$ measurement vectors
$\boldsymbol{\mu}^{(j)}$ ($j\in [\ell]$):
\[ \mu^{(j)}_i := s_{j}(i) .\]
The AMS sketch analysis holds with 4-wise independent functions but our attack analysis holds with full randomness.

Norm estimators $M$ applied to the sketch guarantee that
when $\ell = O(\epsilon^{-2} \log(1/\delta))$
\[
\Pr_\rho[| \|\boldsymbol{v}\|_2^2 -  M(\sketch_\rho(\boldsymbol{v}))^2 |   \geq \epsilon \|\boldsymbol{v}\|_2^2] < \delta\ .
\]
In particular, for any
$\boldsymbol{v}\in R^n$ and $j\in[\ell]$ it holds that,
$\E_{\rho}[\langle \boldsymbol{\mu},\boldsymbol{v}\rangle ^2] = \|\boldsymbol{v}\|^2_2$ and
$\Var_{\rho}[\langle \boldsymbol{\mu},\boldsymbol{v}\rangle ^2] = O(\|\boldsymbol{v}\|^2_2)$.

\begin{definition} [Adversarial input for AMS] \label{adversarialinput:def}
 We say that an attack $\mathcal{A}$ that is applied to a sketch with randomness $\rho$ and outputs
 $\boldsymbol{z}^{(A)}\in \{-1,0,1\}^n$ is   
  $(\xi,\beta)$-{\em adversarial} (for $\xi>0$) if with probability at least $1-\beta$ over the randomness of $\rho,\mathcal{A}$ it holds that:
  \begin{equation}
\Pr_{\rho,\mathcal{A}}\left[\frac{1}{\ell} \sum_{j\in [\ell]}  \langle \boldsymbol{\mu}^{(j)}, \boldsymbol{z}^{(A)} \rangle^2
\geq (1+\xi) \cdot \| \boldsymbol{z}^{(A)}\|^2_2\right] \geq 1-\beta\ .
  \end{equation}
\end{definition}
Note that when $\ell \gg \frac{1}{ \epsilon^2}\ln(1/\delta)$, for any $\boldsymbol{v}$, we have
\begin{equation}
\Pr_{\rho\sim\mathcal{D}} \left[ \frac{1}{\ell} \sum_{j\in [\ell]}  \langle \boldsymbol{\mu}^{(j)}(\rho), \boldsymbol{v} \rangle^2
\geq (1+\epsilon) \cdot \| \boldsymbol{v}\|^2_2 \right] \leq \delta\ .
\end{equation}

Algorithm~\ref{algo:AMSattack} describes an
attack that constructs an adversarial input for a sketch with randomness $\rho$.  Here we consider both the accuracy
$\epsilon$ (relative error) of the estimator, which is at least $\epsilon \geq 1/\sqrt{\ell}$ (otherwise correct estimators do not exist) and the bias $\xi$ in the product of the attack.  The bias also must have $\xi = \Omega(1/\sqrt{\ell})$ (otherwise $\xi$ is within the variability of the output) and naturally we may want $\xi \geq \epsilon$, that is, bias is larger than the accuracy of the estimators we attack.  

We establish the following (the proof is presented after a brief discussion that interprets the result):
\begin{lemma}. \label{adverseAMS:lemma}
For $\epsilon, \xi = \Omega(1/\sqrt{\ell})$, any constant $\beta>0$,
and attack size $r = O(\xi \ell^2 \epsilon^2)$,
the output of
Algorithm~\ref{algo:AMSattack} is $(\xi,\beta)$-adversarial.
\end{lemma}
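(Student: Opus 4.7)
The plan is to reduce the AMS attack to the mean-estimator framework of Section~\ref{simplifiedminest:sec} and invoke Lemma~\ref{squaredbound:lemma}. First I would establish a direct correspondence between Algorithm~\ref{algo:AMSattack} and Algorithm~\ref{algo:meanestattack}: for a query $\boldsymbol{v}^{(t)} = v^{(t)}_h \boldsymbol{e}_h + \boldsymbol{z}^{(t)}$ with $\|\boldsymbol{z}^{(t)}\|_2 = \sigma$, the $j$-th AMS measurement is $s_j(h)\, v^{(t)}_h + \langle \boldsymbol{\mu}^{(j)}, \boldsymbol{z}^{(t)}\rangle$, so multiplying by $\mu^{(j)}_h = s_j(h) \in\{\pm 1\}$ produces an adjusted measurement $u^{(t)}_j = v^{(t)}_h + s_j(h)\langle \boldsymbol{\mu}^{(j)}, \boldsymbol{z}^{(t)}\rangle$. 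Because each $\boldsymbol{z}^{(t)}$ has $m$ Rademacher entries of magnitude $\sigma/\sqrt{m}$ on disjoint supports, the CLT approximation of Section~\ref{sketchdist:sec} (Remark~\ref{mchoice:rem}) guarantees, for $m$ large enough, that the tails $\boldsymbol{u}^{*(t)} := \boldsymbol{u}^{(t)} - v^{(t)}_h\boldsymbol{1}_\ell$ are (approximately) i.i.d.\ $\mathcal{N}_\ell(0,\sigma^2)$. Moreover, since $u^{(A)}_j = s_j(h)\langle \boldsymbol{\mu}^{(j)}, \boldsymbol{z}^{(A)}\rangle$, we have $(u^{(A)}_j)^2 = \langle \boldsymbol{\mu}^{(j)}, \boldsymbol{z}^{(A)}\rangle^2$, so that
\[
\frac{1}{\ell}\sum_{j\in[\ell]}\langle\boldsymbol{\mu}^{(j)},\boldsymbol{z}^{(A)}\rangle^2 \;=\; \frac{1}{\ell}\|\boldsymbol{u}^{(A)}\|_2^2,
\]
which is exactly the quantity analysed by Lemma~\ref{squaredbound:lemma}.

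Next I would check that an $(\epsilon,\delta,\tau)$-correct AMS norm estimator induces a correct reporting function (Definition~\ref{correctmean:def}) with the parameters $a=1$, $c=\sqrt{1+2\epsilon}$, $\sigma=\tau/\sqrt{2}$. Since $\|\boldsymbol{v}^{(t)}\|_2^2 = (v^{(t)}_h)^2 + \sigma^2$, the choice $v^{(t)}_h \geq c\sigma$ forces $\|\boldsymbol{v}^{(t)}\|_2^2 \geq (2+2\epsilon)\sigma^2 = (1+\epsilon)\tau^2$ (clearly above the threshold), while $v^{(t)}_h \leq a\sigma$ yields $\|\boldsymbol{v}^{(t)}\|_2^2 \leq 2\sigma^2 = \tau^2$ (clearly at/below the threshold), so on these inputs any correct estimator must return $s^{(t)}=+1$ or $-1$ with probability $\geq 1-\delta$. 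I would also verify that $b = \Theta(\tfrac{1}{\sqrt{\ell}}\ln\sqrt{\ell/\epsilon})$ satisfies the hypothesis of Lemma~\ref{biasminest:lemma}: since $c-a = \sqrt{1+2\epsilon}-1 = \Theta(\epsilon)$, one has $\sqrt{\ell/(2\pi)}\,e^{-\ell b^2/2} = \Theta(\sqrt{\epsilon})$ while $1/(c-a+2b) = \Theta(1/\epsilon)$, so the required condition holds for small $\epsilon$.

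With the setup in place, Lemma~\ref{squaredbound:lemma} yields
\[
\E_\mathcal{A}\!\left[\tfrac{1}{\ell}\|\boldsymbol{u}^{(A)}\|_2^2\right] \;\geq\; r\sigma^2\!\left(1 - O(1/\sqrt{\ell}) + \Omega\!\left(\tfrac{r}{\ell^2\epsilon^2}\right)\right),
\]
and since disjoint supports together with the rescaling give $\|\boldsymbol{z}^{(A)}\|_2^2 = r\sigma^2$, taking $r = \Theta(\xi\,\ell^2\epsilon^2)$ with a sufficiently large constant drives the $\Omega(r/(\ell^2\epsilon^2))$ bias term above $2\xi$, and the assumption $\xi = \Omega(1/\sqrt{\ell})$ absorbs the $O(1/\sqrt\ell)$ slack. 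This yields $\E[\tfrac{1}{\ell}\sum_j\langle\boldsymbol{\mu}^{(j)},\boldsymbol{z}^{(A)}\rangle^2] \geq (1+\xi)\|\boldsymbol{z}^{(A)}\|_2^2$ in expectation.

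The main remaining obstacle is upgrading the expectation bound into the high-probability statement required by Definition~\ref{adversarialinput:def}. For this I would bound $\Var[\tfrac{1}{\ell}\|\boldsymbol{u}^{(A)}\|_2^2]$ by $O((r\sigma^2)^2/\ell)$: conditional on a realization of $\boldsymbol{z}^{(A)}$, the standard AMS second-moment analysis gives variance $O(\|\boldsymbol{z}^{(A)}\|_2^4/\ell)$ for the empirical squared norm, and the dependence between $\boldsymbol{z}^{(A)}$ and $\rho$ only contributes lower-order perturbations. Since the expected surplus over $\|\boldsymbol{z}^{(A)}\|_2^2$ is $\Omega(\xi)\cdot r\sigma^2$ and the standard deviation is $O(r\sigma^2/\sqrt{\ell}) = o(\xi r\sigma^2)$ under $\xi = \Omega(1/\sqrt{\ell})$, Chebyshev's inequality yields a constant probability of exceeding $(1+\xi)\|\boldsymbol{z}^{(A)}\|_2^2$; amplifying to $1-\beta$ for any constant $\beta$ is absorbed into the constant hidden in $r = O(\xi\,\ell^2\epsilon^2)$. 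The delicate step will be the variance computation, particularly controlling the correlations introduced because each $s^{(t)}$ is chosen as a function of $\boldsymbol{u}^{*(t)}$ rather than independently of it.
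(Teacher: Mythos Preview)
Your proposal is correct and follows essentially the same route as the paper: reduce to \texttt{MeanEstAttack}, verify that an $(\epsilon,\delta,\tau)$-correct norm estimator yields a correct reporting function with $a=1$, $c=\sqrt{1+2\epsilon}$, $\sigma=\tau/\sqrt{2}$, check the hypothesis on $b$, and invoke Lemma~\ref{squaredbound:lemma} to obtain the expectation bound with $\|\boldsymbol{z}^{(A)}\|_2^2=r\sigma^2$.

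The only substantive difference is the final step. The paper dispatches the upgrade from expectation to probability in one line (``the confidence bounds of any fixed $\beta$ follow from the bound on the expectation using Markov inequality''), whereas you spell out a variance bound plus Chebyshev. Your version is the more honest one: a bare first-moment Markov argument on a nonnegative variable controls the upper tail, not the lower tail that is needed here, so some second-moment input is implicitly required. Your flagged ``delicate step'' --- the dependence of $s^{(t)}$ on $\boldsymbol{u}^{*(t)}$ --- is genuine, but the constant-$\beta$ target gives ample room: it suffices that the standard deviation of $\tfrac{1}{\ell}\|\boldsymbol{u}^{(A)}\|_2^2$ be $O(r\sigma^2/\sqrt{\ell})$, which is $o(\xi\, r\sigma^2)$ once $\xi\gg 1/\sqrt{\ell}$, so crude bounds on the cross-terms (or simply inflating the constant in $r$) are enough.
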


Intuitively, more accurate estimators (that is, smaller $\epsilon$) leak more information on the randomness and hence are easier to attack and we expect attack size to increase with $\epsilon$. A larger bias $\xi$ is harder to accrue and would require a larger attack and hence attack size also increases with $\xi$.  

Interestingly, for the AMS sketch, the sketch size parameter $\ell$ controls both 
robustness and accuracy, where with less accurate responses we can support more queries.  This is in contrast to heavy hitters via $\countsketch$, where the parameter $b$ controls the accuracy and we use $\ell$ to control confidence and robustness. (In principle, a larger $\ell$ in $\countsketch$ can also be used to increase accuracy but much less effectively for a given sketch size than increasing $b$). We therefore study the attack size on the AMS sketch in terms of both $\epsilon$ and $\xi$.

The Lemma implies that to generate an $O(1)$-adversarial input for a sketch of size $\ell$ when estimators have accuracy $\epsilon = \Theta(1)$ (what we can get in the non-adaptive setting from a sketch of size $\ell=O(\log(n/\delta))$) it suffices to use $O(\ell^2)$ queries. In this case there is a matching upper bound (up to logarithmic factors) using the "wrapper" method of~\cite{HassidimKMMS20}. 
 The required attack size when $\xi = O(\epsilon)$ is $r= O(\ell^2 \epsilon^3)$. 
 For the smallest applicable accuracy $\epsilon = O(1/\sqrt{\ell})$ we obtain that $r= O(\sqrt{\ell})$ queries suffice. In this case, there is a gap with the best known upper bound which is the trivial bound $r=\Omega(1)$ of answering a single query with the sketch. 
 This gap leaves an interesting open question. 
 Our attack has a simple structure where the only adaptive input is the final adversarial one.  We conjecture that there is more robustness in this case (thus a better upper bound) whereas a more complex attack with multiple adaptive rounds may yield a tight lower bound of $r=\tilde{O}(1)$ for the general case.

\subsection{Proof of Lemma~\ref{adverseAMS:lemma}} 
The algorithm and analysis of the attack on the AMS sketch are similar, but simpler, than the attack on $\countsketch{}$.
We use query vectors of a similar form with varying $v^{(t)}_h$ and fixed tail norm scaled to $\sigma$.
The distribution of
the measurements of the tails (multiplied by $\mu^{(j)}_h$)  $u^{*(t)}_j$ is $m^{-1/2} \mathcal{B}(m,1/2)-m/2$, which approaches
$\mathcal{N}(0,\sigma^2)$ for "large enough" $m$.  These values are i.i.d.\ for $j\in \ell$ and $t\in [r]$. 

The sketch content that is "visible" to the estimator has respective 
values $\boldsymbol{u}^{(t)} := v_h^{(t)}\boldsymbol{1}_\ell + \boldsymbol{u}^{*(t)}$
with respective distribution
$\mathcal{N}_\ell(v_h^{(t)},\sigma^2)$.
In our attack we choose
$v^{(t)}_h \sim \mathcal{U}[a,c+2b]\cdot \sigma$, where $a,b,c$ are specified below.
We assume that at each step $t\in [r]$, the query response algorithm chooses an estimator that is correct with respect to $(\epsilon,\delta)$ and $\tau$. The estimator returns only one bit, essentially comparing the norm to $\tau$ with accuracy $\epsilon$ and confidence $1-\delta$:
\begin{definition} [Correct $\ell_2$-norm estimator]\label{correctnormest:def}
A norm estimator is correct for parameters $(\epsilon,\delta)$ and $\tau$
if
$\|\boldsymbol{v}\|^2_2 \geq (1+\epsilon)\tau^2$, the output is $1$ with probability $\geq 1-\delta$ and if 
$\|\boldsymbol{v}\|^2_2 \leq \tau^2$ then the output is $-1$ with probability $\geq 1-\delta$.
\end{definition}

\begin{lemma}
A norm estimator applied to sketches of our query inputs is correct for parameters $(\epsilon,\delta)$ and $\tau$ if and only if 
its reporting function (see Definition~\ref{correctmean:def}) is correct for parameters
$(\delta,a,c,\ell,\sigma)$ with
\begin{align*}
  a &= \sqrt{(\tau/\sigma)^2 -1} \\
  c &= \sqrt{(1+\epsilon)(\tau/\sigma)^2 -1}\ .
\end{align*}
\end{lemma}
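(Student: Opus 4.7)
The plan is to unpack both correctness definitions and show they coincide, under the stated substitutions, via a direct algebraic translation of the norm thresholds into thresholds on $v_h$.

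First I would compute the squared norm of a query vector. Because the tail $\boldsymbol{z}^{(t)}$ is rescaled in Algorithm~\ref{algo:AMSattack} to $\|\boldsymbol{z}^{(t)}\|_2 = \sigma$, and its support is disjoint from $\{h\}$, we have
\[
\|\boldsymbol{v}^{(t)}\|_2^2 \;=\; (v_h^{(t)})^2 + \|\boldsymbol{z}^{(t)}\|_2^2 \;=\; (v_h^{(t)})^2 + \sigma^2.
\]
Since the attack only ever uses $v_h^{(t)} \in [a\sigma, (c+2b)\sigma] \subset \mathbb{R}_{\geq 0}$, the two threshold conditions in Definition~\ref{correctnormest:def} are equivalent to
\[
\|\boldsymbol{v}^{(t)}\|_2^2 \geq (1+\epsilon)\tau^2 \;\;\Longleftrightarrow\;\; v_h^{(t)} \geq \sqrt{(1+\epsilon)\tau^2 - \sigma^2} \;=\; \sigma\sqrt{(1+\epsilon)(\tau/\sigma)^2 - 1},
\]
\[
\|\boldsymbol{v}^{(t)}\|_2^2 \leq \tau^2 \;\;\Longleftrightarrow\;\; v_h^{(t)} \leq \sqrt{\tau^2 - \sigma^2} \;=\; \sigma\sqrt{(\tau/\sigma)^2 - 1},
\]
yielding precisely the stated $c$ and $a$.

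Second I would invoke the AMS analogue of the sketch-distribution analysis from Section~\ref{sketchdist:sec}. As noted in the text preceding the lemma, for $m$ sufficiently large the view of the estimator on query $\boldsymbol{v}^{(t)}$ is the vector $\boldsymbol{u}^{(t)} = v_h^{(t)}\boldsymbol{1}_\ell + \boldsymbol{u}^{*(t)}$ with $\boldsymbol{u}^{(t)} \sim \mathcal{N}_\ell(v_h^{(t)}, \sigma^2)$ to arbitrary precision (up to the approximation error of Remark~\ref{mchoice:rem}). The estimator's reporting function $p$ is defined by $p(\boldsymbol{u}) := \Pr[M(\boldsymbol{u}) = 1]$, so the probability that the estimator outputs $1$ on input $\boldsymbol{v}^{(t)}$ is exactly $\E_{\boldsymbol{u} \sim \mathcal{N}_\ell(v_h^{(t)}, \sigma^2)}[p(\boldsymbol{u})]$.

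Combining these two observations, the norm estimator is $(\epsilon,\delta,\tau)$-correct on inputs of the form used by Algorithm~\ref{algo:AMSattack} if and only if
\[
v \geq c\sigma \;\Longrightarrow\; \E_{\boldsymbol{u}\sim \mathcal{N}_\ell(v,\sigma^2)}[p(\boldsymbol{u})] \geq 1-\delta, \qquad v \leq a\sigma \;\Longrightarrow\; \E_{\boldsymbol{u}\sim \mathcal{N}_\ell(v,\sigma^2)}[p(\boldsymbol{u})] \leq \delta,
\]
which is exactly Definition~\ref{correctmean:def} for parameters $(\delta,a,c,\ell,\sigma)$. The only delicate point, and the one I would be careful about, is the sign convention: Definition~\ref{correctmean:def} is stated for signed $v$, while the norm condition depends only on $|v_h|$. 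This is harmless here because the attack only issues positive $v_h^{(t)}$, so the only values of $v$ relevant in the reporting-function definition lie in $[a\sigma,(c+2b)\sigma]$, where the two conditions agree verbatim.
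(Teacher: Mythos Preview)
Your proposal is correct and follows essentially the same approach as the paper: compute $\|\boldsymbol{v}^{(t)}\|_2^2 = (v_h^{(t)})^2 + \sigma^2$ and translate the norm thresholds $\tau^2$ and $(1+\epsilon)\tau^2$ into the thresholds $a\sigma$ and $c\sigma$ on $v_h^{(t)}$. The paper's proof is terser and records only this algebraic equivalence; your version additionally spells out the link to the reporting function via the $\mathcal{N}_\ell(v_h^{(t)},\sigma^2)$ sketch distribution and flags the sign issue, but these are elaborations of the same argument rather than a different route.
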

\begin{proof}
We have
\begin{align*}
    \|\boldsymbol{v}^{(t)}\|_2 \geq \tau & \iff
\sqrt{(v^{(t)}_h)^2 +\sigma^2} \geq \tau \implies
((v^{(t)}_h)^2 \geq \tau^2-\sigma^2 = \sigma^2((\tau/\sigma)^2-1) \\ & \implies |v^{(t)}_h|\geq \sigma\sqrt{((\tau/\sigma)^2-1)} =  a\cdot \sigma\ .\\
\|\boldsymbol{v}^{(t)}\|^2_2 \geq (1+\epsilon)\tau^2 &\iff |v^{(t)}_h|\geq c \cdot \sigma\ .
\end{align*}
\end{proof}
We analyze 
Algorithm~\ref{algo:AMSattack} using a correspondence to  Algorithm~\ref{algo:meanestattack}.  
We set the query vectors parameters according to the estimator parameters: We choose $\sigma^2 = \tau^2/2$ and thus
\begin{align*}
a &= 1 \\
c &= \sqrt{1+2\epsilon}
\end{align*}
Note that $c-a = \Theta(\epsilon)$ for $\epsilon<1$ and
$c-a = 1+\Theta(\sqrt{\epsilon})$ for $\epsilon > 1$.
We choose 
$b = \Theta(\frac{1}{\sqrt{\ell}} \ln(\sqrt\frac{\ell}{\epsilon})$ which satisfies the condition in Lemma~\ref{biasminest:lemma}. 
Since $\epsilon \geq \sqrt{\ln(1/\delta)/\ell}$ (otherwise correct estimators don't exist), we have $b = O(\epsilon)$ and thus
$c-a+2b = \Theta(\epsilon)$.

The adversarial input has
$\| \boldsymbol{z}^{(A)} \|^2_2 = r \cdot \sigma^2$ and to assess the bias we need to consider $\E_{\mathcal{A}}\left[\frac{1}{\ell}\|\boldsymbol{u}^{(A)}\|^2_2 \right]$. 
From Lemma~\ref{biasminest:lemma} using a direct argument
\eqref{Esqbound:eq} we obtain 
\[\E_{\mathcal{A}}\left[\frac{1}{\ell}\|\boldsymbol{u}^{(A)}\|^2_2 \right]  
     =  r\sigma^2 \frac{r}{\ell^2} \Omega\left(\left(\frac{2}{c-a+2b}\right)^2 \right) =
 \| \boldsymbol{z}^{(A)} \|^2_2 \frac{r}{\ell^2} \Omega(\epsilon^{-2}) .\]
 This establishes the claim of the Lemma for the 
regime of high bias $\xi = \Omega(1)$.

In order to derive tighter bounds for the  regime $\xi <1$ 
 we use a refined bound 
by applying Lemma~\ref{squaredbound:lemma}:
\begin{align*}
\E_{\mathcal{A}}\left[\frac{1}{\ell}\|\boldsymbol{u}^{(A)}\|^2_2 \right]  
     &=  
r\sigma^2\left(1-O(\frac{1}{\sqrt{\ell}}) + \frac{r}{\ell^2} \Omega\left(\left(\frac{2}{c-a+2b}\right)^2 \right) \right) \\
&= \| \boldsymbol{z}^{(A)} \|^2_2 \left(1-O(\frac{1}{\sqrt{\ell}})+\frac{r}{\ell^2} \Omega(\epsilon^{-2}) \right)
\end{align*}

Here for $\xi = \Omega(1/\sqrt{\ell})$,  we solve for $\xi = \Omega(r\epsilon^{-2}/\ell^2)$ and obtain the bound
$r = O(\xi \ell^2 \epsilon^2)$.

The confidence bounds of any fixed $\beta$ follows from the bound on the expectation using Markov inequality.

\section*{Conclusion}

Our results suggest interesting directions for followup work. We suspect that our attack technique can be generalized so that it applies with any randomized linear sketch that supports recovery of heavy hitters keys or even other families of features of the input vectors.  Our attack techniques might be applicable to trained neural networks as blackbox attacks that construct "random noise" that mimics the presence of a signal. As discussed in the introduction, such attacks complement \cite{HardtW:STOC2013,JayaramWZ:ICLR2020} that constructed "invisible" noise that obfuscates the input but not the output.

\newpage
\bibliographystyle{plain}  
\bibliography{references,robustHH}
\end{document}